\documentclass{article}
\usepackage{authblk}
\usepackage{graphicx} 

\usepackage[top=1in, bottom=1in, left=1in, right=1in]{geometry}

\usepackage[colorinlistoftodos]{todonotes}
\usepackage{url}
\usepackage{hyperref}
\usepackage{natbib}
\usepackage{booktabs}
\usepackage{mathtools}
\usepackage{amsthm}
\newtheorem{remark}{Remark}
\usepackage{subcaption}
\usepackage{comment}
\usepackage{graphicx} 
\usepackage{amsmath,amssymb}
\usepackage{float}
\usepackage{authblk}
\usepackage{amsthm}
\usepackage{amsthm}
\usepackage{tikz-cd}
\usepackage{verbatim}
\usepackage{xcolor}
\usepackage[colorinlistoftodos]{todonotes}
\newtheorem{theorem}{Theorem}
\newtheorem{lemma}[theorem]{Lemma}
\newtheorem{proposition}[theorem]{Proposition}
\newtheorem{corollary}[theorem]{Corollary}

\theoremstyle{definition}
\newtheorem{definition}[theorem]{Definition}
\usepackage{graphicx} 

\title{Basis Rigidity of the AES S-box and Generic Rigidity of Inversion under Affine Transformations}

\author{
Zheng Zhang \thanks{Corresponding author. Department of Mathematics, Towson University, 7800 York Rd, Towson, MD 21204, USA. Email: \texttt{zhengzhang@towson.edu}.}
Na Zhang \thanks{Department of Mathematics, Towson University, 7800 York Rd, Towson, MD 21204, USA. Email: \texttt{nzhang@towson.edu}.}
}

\date{}

\begin{document}

\maketitle

\begin{abstract}
The AES S-box is constructed from finite field inversion followed by a
fixed affine transformation. Since inversion possesses intrinsic
Frobenius symmetries among its coordinate realizations, we study how
these basis symmetries are altered by outer affine transformations.

We first develop a deterministic rigidity criterion for transformed
inversion and apply it to the AES S-box. This shows that the linear part of the AES S-box affine transformation
alone makes the transformed inversion map basis rigid. We then investigate the corresponding generic problem when
the outer invertible linear transformation varies. The existence of a nontrivial linear stabilizer is reduced to a
conjugacy problem for semilinear candidates arising from two sided
linear equivalences of inversion, which we characterize in terms of relative norms and Frobenius
orbits. We also determine the dimensions of the associated centralizer
algebras exactly. These structural results imply that, for a uniformly
chosen outer linear transformation, the probability that the linear stabilizer is nontrivial is bounded by $2^{-\Omega(n^2)}$, with sharper
finite dimensional bounds obtained from the exact conjugacy condition. Computational experiments independently verify the AES rigidity result,
the conjugacy and centralizer formulas, and the finite dimensional
estimates in small dimensions.
\end{abstract}

Keywords: AES S-box, finite field inversion, linear stabilizers, semilinear conjugacy.

\section{Introduction}

The AES S-box~\cite{NIST197} is one of the most important nonlinear components in
modern symmetric-key cryptography. Its construction is algebraic. Over
$\mathbb F_{2^8}$, it is obtained from finite field inversion followed
by a fixed affine transformation. This description makes the AES S-box
a natural object for studying the interaction between finite field
structure and coordinate realization.

A finite field map is represented as a vectorial function only after a
basis over $\mathbb F_2$ is chosen. Different ordered bases generally
lead to different coordinate realizations, but distinct bases can
sometimes produce the same coordinate realization. Finite field inversion is a
basic example. Because inversion commutes with Frobenius automorphisms,
its coordinate realizations possess an intrinsic basis redundancy.
In our previous work~\cite{Zhang2026inversion}, we showed that these
Frobenius symmetries account for all such collisions among ordered basis
representations.

The AES S-box is obtained by placing a fixed affine layer on top of this
highly symmetric inversion map. This makes AES a natural first case for asking how an outer
transformation changes the basis rigidity of inversion. The same
question can then be posed more generally. When inversion is followed
by an arbitrary invertible linear map, when can the resulting map have
a nontrivial linear stabilizer, and is this behavior typical or
exceptional?

We study this question under linear changes of
coordinates induced by finite field basis transformations. We consider
both the behavior for a fixed outer linear map, with the AES S-box linear layer
as a distinguished case, and the generic behavior when the outer linear
map varies over $\mathrm{GL}_n(\mathbb F_2)$.
\subsection{Related Work and Scope}

The AES S-box has been studied extensively from the viewpoint of linear
and affine equivalence. Biryukov et al.~\cite{biryukov2003toolbox} developed algorithms for these equivalence
problems and exhibited $2040$ self-equivalence relations for the
Rijndael S-box. Ranea and Preneel~\cite{ranea2021self} further investigated self-equivalence encodings and
centralizer structures arising in white-box implementations.

General two sided affine self-equivalences are broader than the common linear symmetry considered here, since the input and output
transformations may differ and affine translations may also be allowed.
A finite field basis change is more restrictive, since the same linear
coordinate transformation acts simultaneously on both sides. Thus a
function may possess many general self-equivalences while still being
rigid under basis changes.

Different finite field representations of the AES S-box have also been
studied through polynomial, matrix, and basis-dependent descriptions~\cite{rosenthal2003polynomial,liu2005rijndael}. These works describe how
the algebraic form of the S-box changes with the chosen representation.
Our question is different. We ask when two distinct ordered bases induce
exactly the same coordinate realization.

The main algebraic input to our analysis is the classification of
inversion equivalence pairs due to Yuan et al.~\cite{yuan2019inverse}. In characteristic two and for $n\ge4$, their
result identifies the possible semilinear transformations associated
with finite field inversion. This classification determines the semilinear candidates arising from
two sided linear equivalences of inversion, but it does not determine
which of them are compatible with a fixed outer linear transformation,
nor how frequently such compatibility occurs when the outer
transformation varies. These compatibility and counting questions
are central to the present work.

Broader equivalence classes of inversion have also been considered.
For $n\ge5$, K\"olsch~\cite{kolsch2021ccz} showed that every function
CCZ-equivalent to inversion is EA-equivalent to it, and that every
permutation in its CCZ class is affine equivalent to inversion. Such
results concern membership in equivalence classes. By contrast, we fix
an individual transformed inversion map and study its stabilizer under
linear conjugation.

A separate line of work concerns generic stabilizer triviality.
Ishizuka~\cite{ishizuka2026stabilizers} investigates this question for extended-affine stabilizers of vectorial functions over finite fields. Our setting involves a different symmetry and a different probability space. We fix the nonlinear core as finite field inversion and vary only the outer invertible linear transformation. Genericity in the full function space therefore does not directly determine the exceptional proportion within this structured family. Our rigidity bounds are established independently.

In summary, the present paper lies between the theory of
self-equivalences of cryptographic functions and the algebraic structure
of finite field inversion. Our contribution is to determine how the semilinear equivalence
structure of inversion interacts with a fixed or varying outer linear
transformation, with the AES S-box providing the principal motivating
example.

\subsection{Our Contributions}

Our contributions are as follows:

\begin{enumerate}

\item
We derive a deterministic coefficient criterion for basis rigidity of
transformed inversion. Applying it to the AES S-box linear layer shows that the AES S-box is
basis rigid. No nonidentity linear change of basis preserves its
coordinate realization. Moreover, this linear layer alone enforces the
rigidity and the affine constant is not needed.

\item
We characterize exactly which semilinear candidates arising
from two sided linear equivalences of inversion can occur
as linear symmetries of \(A\circ J_n\) for some
invertible outer linear map \(A\). The resulting conjugacy criterion is expressed in terms
of relative norms and Frobenius orbits. We also determine exactly the
dimensions of the corresponding centralizer algebras.

\item
Using the conjugacy criterion and the centralizer dimensions, we obtain
a generic rigidity theorem for transformed inversion. For a uniformly
chosen invertible outer linear transformation, the probability that the linear stabilizer is nontrivial is bounded by
$
2^{-\Omega(n^2)}.
$
We further derive computable finite dimensional refinements by retaining
the exact conjugacy condition and grouping the admissible norm values according to their degrees over
$\mathbb F_2$.

\item
We provide independent computational verification of the theoretical
results. A closure search on complete truth tables confirms the AES
rigidity result without using the semilinear reduction. Exact enumeration
determines the exceptional probabilities for $n=3,4,5$, while direct
matrix computations verify the conjugacy criterion, the centralizer
dimension formula, and the grouped refined bound for all $3{,}537$
nonidentity semilinear candidates in dimensions $3\le n\le8$.
Monte Carlo sampling provides additional finite dimensional comparisons
in dimensions $5\le n\le8$.

\end{enumerate}

\section{Deterministic Rigidity of Transformed Inversion}
This section develops the deterministic framework needed to analyze
basis rigidity for transformed inversion maps. We first relate basis collisions to linear stabilizers, then use
the two sided linear equivalences of finite field inversion to identify
the possible semilinear stabilizer candidates. We next derive
coefficient compatibility conditions for a fixed outer linear
transformation and apply them to the AES S-box. The resulting AES
rigidity theorem serves as a concrete motivating case for the broader
generic theory developed in Section~3.

\subsection{Basis Representations and Linear Stabilizers}
\label{sec:basis-stabilizers}

Let $\mathbb F=\mathbb F_{2^n}$, and fix an ordered reference basis
$E=(e_1,\ldots,e_n)$. For an ordered $\mathbb F_2$-basis
$B=(b_1,\ldots,b_n)$, let $P_B\in\mathrm{GL}_n(\mathbb F_2)$ be defined by
\[
[x]_E=P_B[x]_B.
\]
Let \(F:\mathbb F\to\mathbb F\) be a map.
We identify \(F\) with its coordinate realization in the
reference basis \(E\). Its realization in the basis \(B\) is
\[
F_B=P_B^{-1}FP_B.
\]

\begin{definition}
\label{def:basis-equivalence}
Two ordered $\mathbb F_2$-bases $B$ and $B'$ are said to be
\emph{equivalent with respect to $F$}, written
\[
B\sim_F B',
\]
if
\[
F_B=F_{B'}.
\]
The \emph{linear stabilizer} of $F$ is
\[
\operatorname{Stab}(F)
=
\{T\in\mathrm{GL}_n(\mathbb F_2):TF=FT\}.
\]
\end{definition}

A direct conjugation argument gives
\begin{equation}
\label{equ1}
 F_B=F_{B'}
\quad\Longleftrightarrow\quad
P_{B'}P_B^{-1}\in\operatorname{Stab}(F). \end{equation}

\begin{definition}
\label{def:basis-rigidity}
The map $F$ is called \emph{basis rigid} if
\[
F_B=F_{B'}
\quad\Longrightarrow\quad
B=B'
\]
for all ordered $\mathbb F_2$-bases $B,B'$ of $\mathbb F$.
\end{definition}

By~\eqref{equ1},
\[
F\text{ is basis rigid}
\quad\Longleftrightarrow\quad
\operatorname{Stab}(F)=\{I\}.
\]
Moreover, the equivalence classes of ordered bases are precisely the
cosets of $\operatorname{Stab}(F)$ in $\mathrm{GL}_n(\mathbb F_2)$.
Since ordered $\mathbb F_2$-bases of $\mathbb F$ are in bijection with
$\mathrm{GL}_n(\mathbb F_2)$, the number of distinct basis
representations is
\[
N_{\mathrm{rep}}(F)
=
\frac{|\mathrm{GL}_n(\mathbb F_2)|}
     {|\operatorname{Stab}(F)|}.
\]

\subsection{Stabilizer Reduction for Affine Transformations of Inversion}
\label{sec:postprocessed-inversion}

Let $\mathbb F=\mathbb F_{2^n}$ and define the inversion map
\[
J_n(x)=
\begin{cases}
x^{-1}, & x\neq 0,\\
0, & x=0.
\end{cases}
\]

\begin{definition}
\label{def:transformed-inversion}
For $A\in\mathrm{GL}_n(\mathbb F_2)$ and $c\in\mathbb F$, define
\[
F_{A,c}(x)=A(J_n(x))+c.
\]
We refer to $F_{A,c}$ as a transformed inversion map with an affine
outer layer.
\end{definition}

Let $\sigma(x)=x^2$ denote the Frobenius automorphism, and for
$a\in\mathbb F^\times$ let
\[
M_a(x)=ax.
\]

Although $F_{A,c}$ contains an affine outer layer, a change of basis
still acts through a single linear map on the input and output
coordinate spaces. Hence its basis symmetries are governed by the
linear stabilizer introduced in subsection~\ref{sec:basis-stabilizers}.

For pure inversion, the linear stabilizer is the Frobenius group
\[
\operatorname{Stab}(J_n)
=
\langle\sigma\rangle
=
\{\sigma^r:0\le r<n\}.
\]
Thus inversion has an exact $n$-fold redundancy among ordered basis
representations. After composition with an outer affine transformation, the resulting linear stabilizer need not be a subgroup of
$\operatorname{Stab}(J_n)$. We therefore use the classification of
two sided linear equivalences of inversion. This yields semilinear
candidates of the form $M_a\sigma^r$, which become linear
symmetries of the transformed map precisely when they satisfy the
compatibility conditions derived below.

\begin{theorem}[Yuan et al.~\cite{yuan2019inverse}]
\label{thm:yuan}
Let $n\ge4$ and $T,U\in\mathrm{GL}_n(\mathbb F_2)$. Then
\[
J_nT=UJ_n
\]
if and only if there exist $a\in\mathbb F^\times$ and $0\le r<n$ such
that
\[
T=M_a\sigma^r,
\qquad
U=M_{a^{-1}}\sigma^r.
\]
\end{theorem}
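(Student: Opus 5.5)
The easy direction ("if") is a direct computation; the content lies in the "only if" direction. For the "if" part, take $T=M_a\sigma^r$ and $U=M_{a^{-1}}\sigma^r$ and evaluate at $x\neq0$:
\[
J_nT(x)=(ax^{2^r})^{-1}=a^{-1}(x^{-1})^{2^r}=U J_n(x),
\]
using that $\sigma^r$ is a field automorphism. Both sides vanish at $x=0$.

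For the converse, the plan is to turn the hypothesis into a functional equation and then exploit the degree structure of inversion as a polynomial. We write $T(x)=\sum_i t_i x^{2^i}$ and $U(x)=\sum_j u_j x^{2^j}$ as linearized polynomials. The hypothesis then says
\[
T(x)^{-1}=U(x^{-1})\qquad\text{for all } x\neq0,
\]
which is equivalent to
\[
T(x)\,U(x^{-1})=1\qquad\text{for all } x\in\mathbb F^\times.
\]

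We expand this product as a function on $\mathbb F^\times$. It is a Laurent polynomial
\[
\sum_{i,j} t_i u_j\, x^{2^i-2^j},
\]
with exponents read modulo $2^n-1$. The function equals $1$ on the cyclic group $\mathbb F^\times$. Because distinct exponents modulo $2^n-1$ give linearly independent characters, the coefficient of each nonzero residue class must vanish, and the constant term must equal $1$.

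The constant term is $\sum_i t_iu_i$. For $i\ne j$ we need to understand when two differences $2^i-2^j$ and $2^k-2^l$ coincide modulo $2^n-1$. We also need when such a difference could be $\equiv 0$; that happens only when $i=j$. The main obstacle is the following claim.

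\emph{Claim.} For $n\ge4$, $2^i-2^j\equiv 2^k-2^l \pmod{2^n-1}$ with $i\neq j$ forces $(k,l)=(i,j)$.

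The equation amounts to $2^i+2^l\equiv 2^k+2^j$. Binary representations of residues with at most two bits are essentially unique, so $\{i,l\}=\{k,j\}$. The alternative $i=j$, $k=l$ is excluded by $i\ne j$, and the collapse $2^i+2^i=2^{i+1}$ only matters when $i=l$ and needs care. I expect small $n$ to produce extra coincidences, which is why $n\ge4$ is needed. I would verify this carefully, treating the degenerate carry cases separately.

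Granting the claim, every off-diagonal product vanishes: $t_iu_j=0$ for all $i\ne j$. Since $T\neq0$, some $t_r\neq0$, and then $u_j=0$ for all $j\ne r$. Since $U\ne0$, we get $u_r\ne0$. Repeating the argument with the roles reversed gives $t_i=0$ for $i\ne r$. Hence $T=M_a\sigma^r$ with $a=t_r$ and $U=M_b\sigma^r$ with $b=u_r$, and the constant term gives $ab=1$, so $b=a^{-1}$.

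If the exponent-coincidence claim proves messier than expected, I would fall back on the cited Yuan et al. classification, since the paper takes Theorem~\ref{thm:yuan} as given. The expansion argument above is the self-contained route I would attempt first.
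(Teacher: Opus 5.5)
Your proof is correct, and it takes a genuinely different route from the paper. The paper gives no argument for Theorem~\ref{thm:yuan}; it simply imports the $p=2$ case of Yuan et al. You replace the citation by a self-contained coefficient comparison. Its ingredients are the identity $T(x)U(x^{-1})=1$ on $\mathbb F^\times$, the linear independence of the functions $x\mapsto x^e$ for $0\le e<2^n-1$ (a Vandermonde argument), and the exponent claim.

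The claim is easy to finish. For $n\ge3$, each $2^k+2^j$ with $k\ne j$ is a two-bit integer below $2^n-1$, while $2^i+2^i\equiv 2^{(i+1)\bmod n}$ is a one-bit residue. In $2^i+2^l\equiv 2^k+2^j$ the cases then go as follows.
\begin{itemize}
\item Mixed cases (one bit on one side, two bits on the other) are impossible by uniqueness of $n$-bit expansions of residues in $[0,2^n-2]$.
\item The two-bit case gives $\{i,l\}=\{k,j\}$, and hence $(k,l)=(i,j)$ because $i\ne j$.
\item The one-bit case gives $i=l$, $k=j$ and $i=k$, hence $i=j$, which is excluded.
\end{itemize}
Taking $k=l$ in the claim also yields $2^i-2^j\not\equiv 0$.

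Nothing here uses $n\ge4$. For $n=3$ the residues $2^i-2^j$ with $i\ne j$ are exactly $1,\dots,6$ modulo $7$. So your expectation of extra coincidences in small dimension is mistaken, and the hypothesis $n\ge4$ is merely inherited from the cited result.

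The citation buys the paper brevity and reliance on a published classification. Your argument buys independence from that classification, and it also covers $n=3$, which the paper otherwise handles only computationally in subsection~\ref{subsec:exact-probabilities}.
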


This is the $p=2$ case of Theorem~1 of
Yuan et al.~\cite{yuan2019inverse}, rewritten in our notation and
composition convention.

We now reduce the stabilizer of $F_{A,c}$ to this classification.

\begin{lemma}
\label{lem:reduction}
For $T\in\mathrm{GL}_n(\mathbb F_2)$,
\[
T\in\operatorname{Stab}(F_{A,c})
\]
if and only if
\[
T(c)=c
\qquad\text{and}\qquad
J_nT=(A^{-1}TA)J_n.
\]
\end{lemma}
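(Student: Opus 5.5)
The plan is to unfold the stabilizer condition $TF_{A,c}=F_{A,c}T$ pointwise and separate its linear and constant parts by evaluating at $x=0$. Here $T$ acts $\mathbb F_2$-linearly, so $T(u+v)=T(u)+T(v)$ and $T(0)=0$. For every $x\in\mathbb F$ we have
\[
T(F_{A,c}(x)) = T(A(J_n(x))) + T(c),\qquad
F_{A,c}(T(x)) = A(J_n(T(x))) + c .
\]
So $T\in\operatorname{Stab}(F_{A,c})$ if and only if
\[
TAJ_n(x) + T(c) = AJ_nT(x) + c \quad\text{for all } x\in\mathbb F.
\]

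For the forward direction, evaluate this identity at $x=0$. Since $J_n(0)=0$, $T(0)=0$ and $A(0)=0$, both inner terms vanish and we get $T(c)=c$. Cancelling $T(c)=c$ from both sides leaves $TAJ_n=AJ_nT$ as maps on $\mathbb F$. Applying $A^{-1}$, which is a bijection, on the left gives $A^{-1}TAJ_n=J_nT$, i.e.
\[
J_nT=(A^{-1}TA)J_n .
\]

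For the converse, assume $T(c)=c$ and $J_nT=(A^{-1}TA)J_n$. Applying $A$ on the left gives $AJ_nT=TAJ_n$. Adding $c=T(c)$ to both sides recovers the pointwise identity above, so $T\in\operatorname{Stab}(F_{A,c})$.

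There is no real obstacle here. The one point needing care is that the maps are composed as functions rather than multiplied as matrices, because $J_n$ is nonlinear. For this reason $A^{-1}$ may only be applied on the outside (the left) of the composition, and we must never attempt to distribute a map over $J_n$. Likewise, the translation by $c$ has to be peeled off using only the linearity of $T$ and the value at $0$, not any linearity of $J_n$.
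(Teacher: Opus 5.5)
Your proposal is correct and follows the paper's proof essentially verbatim: unfold $TF_{A,c}=F_{A,c}T$ pointwise, evaluate at $x=0$ to get $T(c)=c$, cancel the constants to obtain $TAJ_n=AJ_nT$, and compose with $A^{-1}$ on the left. The converse is the same argument run in reverse. Your remark that $A^{-1}$ is applied only by outer composition, because $J_n$ is nonlinear, is exactly the right care to take.
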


\begin{proof}
The relation $TF_{A,c}=F_{A,c}T$ is equivalent to
\[
TAJ_n(x)+T(c)=AJ_n(Tx)+c.
\]
Setting $x=0$ gives $T(c)=c$, and substitution then yields
\[
TAJ_n=AJ_nT,
\]
or equivalently
\[
J_nT=(A^{-1}TA)J_n.
\]
The converse follows by reversing the argument.
\end{proof}

\begin{theorem}
\label{thm:stabilizer-reduction}
Let $n\ge4$. Then
\[
\operatorname{Stab}(F_{A,c})
=
\left\{
M_a\sigma^r:
A^{-1}M_a\sigma^rA=M_{a^{-1}}\sigma^r,\;
M_a\sigma^r(c)=c
\right\},
\]
where $a\in\mathbb F^\times$ and $0\le r<n$.
\end{theorem}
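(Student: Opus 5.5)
The plan is to combine Lemma~\ref{lem:reduction} with Theorem~\ref{thm:yuan}, proving the two inclusions separately. The work is mostly bookkeeping: check that the matrix $U$ produced by the classification is exactly the conjugate $A^{-1}TA$, and that the translation condition carries over unchanged.

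For the inclusion ``$\subseteq$'', take $T\in\operatorname{Stab}(F_{A,c})$. By Lemma~\ref{lem:reduction} we have $T(c)=c$ and $J_nT=UJ_n$ with $U:=A^{-1}TA$. Since $A,T\in\mathrm{GL}_n(\mathbb F_2)$, the matrix $U$ also lies in $\mathrm{GL}_n(\mathbb F_2)$, so Theorem~\ref{thm:yuan} applies because $n\ge4$. It produces $a\in\mathbb F^\times$ and $0\le r<n$ with
\[
T=M_a\sigma^r \qquad\text{and}\qquad A^{-1}TA=U=M_{a^{-1}}\sigma^r .
\]
Substituting $T=M_a\sigma^r$ into both conditions gives exactly the two defining conditions of the right-hand set, namely $A^{-1}M_a\sigma^rA=M_{a^{-1}}\sigma^r$ and $M_a\sigma^r(c)=c$.

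For the inclusion ``$\supseteq$'', take $T=M_a\sigma^r$ satisfying both conditions. First, $T$ is $\mathbb F_2$-linear and invertible, since $\sigma$ is additive and bijective and $a\ne0$, so $T\in\mathrm{GL}_n(\mathbb F_2)$. Next we check directly that $J_nT=M_{a^{-1}}\sigma^rJ_n$. For $x\neq0$,
\[
(ax^{2^r})^{-1}=a^{-1}(x^{-1})^{2^r},
\]
and at $x=0$ both sides vanish. This is the easy direction of Theorem~\ref{thm:yuan}, which could also be cited instead. Combining this identity with the hypothesis $A^{-1}TA=M_{a^{-1}}\sigma^r$ gives $J_nT=(A^{-1}TA)J_n$. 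Together with $T(c)=c$, Lemma~\ref{lem:reduction} then yields $T\in\operatorname{Stab}(F_{A,c})$.

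There is no real obstacle here. The only points needing care are the ones above: the ``$\subseteq$'' direction uses the invertibility of $U=A^{-1}TA$ so that Theorem~\ref{thm:yuan} applies, and the ``$\supseteq$'' direction uses the membership $T\in\mathrm{GL}_n(\mathbb F_2)$ so that $T$ is a legitimate element of the stabilizer.
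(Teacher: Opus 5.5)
Your proposal is correct and follows the paper's proof: both inclusions come from combining Lemma~\ref{lem:reduction} with Theorem~\ref{thm:yuan}. You also spell out two points the paper leaves implicit, namely that $A^{-1}TA\in\mathrm{GL}_n(\mathbb F_2)$ so the classification applies, and the direct check $J_nM_a\sigma^r=M_{a^{-1}}\sigma^rJ_n$ for the converse; the argument is otherwise the same.
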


\begin{proof}
By Lemma~\ref{lem:reduction}, any
$T\in\operatorname{Stab}(F_{A,c})$ satisfies
\[
T(c)=c,
\qquad
J_nT=(A^{-1}TA)J_n.
\]
Theorem~\ref{thm:yuan} therefore gives
$T=M_a\sigma^r$ and
\[
A^{-1}M_a\sigma^rA=M_{a^{-1}}\sigma^r.
\]
The converse follows from the same two results.
\end{proof}

Thus, for a fixed outer linear map $A$, every linear symmetry must be one of the semilinear candidates
$M_a\sigma^r$ and satisfy a
specific compatibility relation with $A$. In the next subsection, we
translate this relation into explicit coefficient conditions using the
linearized polynomial representation of $A$. For later use, we write
\[
T_{a,r}:=M_a\sigma^r,
\qquad
a\in\mathbb F^\times,\quad 0\le r<n.
\]

\subsection{Coefficient Compatibility and Deterministic Rigidity}
\label{subsec:coefficient-rigidity}

The stabilizer reduction above restricts every linear symmetry of
$A\circ J_n$ to a semilinear map of the form $M_a\sigma^r$. We now
translate the remaining compatibility condition into constraints on the
linearized polynomial coefficients of $A$.

Let
\[
A(x)=\sum_{i=0}^{n-1}\lambda_i x^{2^i},
\qquad
\lambda_i\in\mathbb F,
\]
be the unique linearized polynomial representation of
$A\in\mathrm{GL}_n(\mathbb F_2)$, and write
\[
S_A=\{i:\lambda_i\neq0\}.
\]

\begin{proposition}
\label{prop:coefficient-compatibility}
Let $n\ge4$, and suppose
\[
T_{a,r}=M_a\sigma^r\in\operatorname{Stab}(A\circ J_n),
\]
where $a\in\mathbb F^\times$ and $0\le r<n$. Then, for every
$i\in S_A$,
\[
a^{2^i+1}=\lambda_i^{\,1-2^r}.
\]
Consequently, for every $i,j\in S_A$,
\[
\left(
\frac{\lambda_j^{\,2^i+1}}
     {\lambda_i^{\,2^j+1}}
\right)^{1-2^r}=1.
\]
\end{proposition}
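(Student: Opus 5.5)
By Theorem~\ref{thm:stabilizer-reduction} with $c=0$, the hypothesis $T_{a,r}\in\operatorname{Stab}(A\circ J_n)$ is equivalent to the operator identity
\[
M_a\sigma^rA=AM_{a^{-1}}\sigma^r .
\]
The plan is to evaluate both sides as linearized polynomials and compare coefficients. This comparison is legitimate because every $\mathbb F_2$-linear map $\mathbb F\to\mathbb F$ has a unique representation $\sum_{i=0}^{n-1}\mu_i x^{2^i}$ with reduced exponents. Here $x^{2^n}=x$ on $\mathbb F$, so all exponents may be read modulo $n$.

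First compute the left side:
\[
M_a\sigma^rA(x)=a\Bigl(\sum_i\lambda_i x^{2^i}\Bigr)^{2^r}
=\sum_i a\lambda_i^{2^r}x^{2^{i+r}} .
\]
Next compute the right side:
\[
AM_{a^{-1}}\sigma^r(x)=\sum_i\lambda_i\bigl(a^{-1}x^{2^r}\bigr)^{2^i}
=\sum_i\lambda_i a^{-2^i}x^{2^{i+r}} .
\]
Both sums are indexed by the same shifted exponents $2^{i+r}$, with indices taken modulo $n$. Since the shift $i\mapsto i+r$ is a bijection of $\mathbb Z/n$, uniqueness of the reduced representation gives, for every $i$,
\[
a\lambda_i^{2^r}=\lambda_i a^{-2^i}.
\]

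For $i\in S_A$ we may divide by $\lambda_i$. This gives $a^{2^i+1}=\lambda_i^{1-2^r}$, which is the first claim. The only point needing care is the bookkeeping of the index shift, namely that the coefficient of $x^{2^{i+r}}$ on each side comes from the same $i$. No further obstacle is expected.

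For the consequence, take $i,j\in S_A$. We have
\[
\bigl(a^{2^i+1}\bigr)^{2^j+1}=\bigl(a^{2^j+1}\bigr)^{2^i+1},
\]
since both equal $a^{(2^i+1)(2^j+1)}$. Substituting the first claim gives
\[
\lambda_i^{(1-2^r)(2^j+1)}=\lambda_j^{(1-2^r)(2^i+1)} .
\]
Every $\lambda_i$ with $i\in S_A$ is nonzero, so this rearranges to
\[
\Bigl(\lambda_j^{2^i+1}/\lambda_i^{2^j+1}\Bigr)^{1-2^r}=1 .
\]
All quantities lie in $\mathbb F^\times$, so negative exponents such as $1-2^r$ are meaningful.
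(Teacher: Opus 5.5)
Your proposal is correct and follows the paper's proof essentially step for step. You obtain $M_a\sigma^rA=AM_{a^{-1}}\sigma^r$ from Theorem~\ref{thm:stabilizer-reduction}, compare linearized-polynomial coefficients to get $a\lambda_i^{2^r}=\lambda_i a^{-2^i}$, and then eliminate $a$ by raising the two relations to the powers $2^j+1$ and $2^i+1$.
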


\begin{proof}
By Theorem~\ref{thm:stabilizer-reduction}, with $c=0$, the stabilizing
condition is
\[
M_a\sigma^r A=A M_{a^{-1}}\sigma^r.
\]
Applying both sides to $x\in\mathbb F$ gives
\[
aA(x)^{2^r}=A(a^{-1}x^{2^r}).
\]
Using the linearized polynomial representation of $A$,
\[
\sum_{i=0}^{n-1}
a\lambda_i^{2^r}x^{2^{i+r}}
=
\sum_{i=0}^{n-1}
\lambda_i a^{-2^i}x^{2^{i+r}}.
\]
Since $\sigma^n=I$, the Frobenius exponents are understood modulo $n$.
Uniqueness of the linearized polynomial representation therefore gives
\[
a\lambda_i^{2^r}
=
\lambda_i a^{-2^i}
\]
for every $i$. Hence, whenever $\lambda_i\neq0$,
\[
a^{2^i+1}
=
\lambda_i^{\,1-2^r}.
\]

For $i,j\in S_A$, raising the relation for $i$ to the power
$2^j+1$ and the relation for $j$ to the power $2^i+1$ eliminates $a$
and gives
\[
\lambda_i^{(2^j+1)(1-2^r)}
=
\lambda_j^{(2^i+1)(1-2^r)}.
\]
Rearranging yields
\[
\left(
\frac{\lambda_j^{\,2^i+1}}
     {\lambda_i^{\,2^j+1}}
\right)^{1-2^r}=1.
\]
\end{proof}

For $i,j\in S_A$, define
\[
\Gamma_{ij}
=
\frac{\lambda_j^{\,2^i+1}}
     {\lambda_i^{\,2^j+1}}
\in\mathbb F^\times.
\]

\begin{theorem}
\label{thm:coefficient-rigidity}
Let $n\ge4$. Suppose that the linearized polynomial coefficients of
$A\in\mathrm{GL}_n(\mathbb F_2)$ satisfy the following two conditions:
\begin{enumerate}
\item
\[
\gcd\bigl(2^n-1,\{\,2^i+1:i\in S_A\,\}\bigr)=1;
\]
\item for every $1\le r<n$, there exist $i,j\in S_A$ such that
\[
\operatorname{ord}(\Gamma_{ij})\nmid 2^r-1.
\]
\end{enumerate}
Then
\[
\operatorname{Stab}(A\circ J_n)=\{I\}.
\]
\end{theorem}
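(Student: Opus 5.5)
By Theorem~\ref{thm:stabilizer-reduction} with $c=0$, every element of $\operatorname{Stab}(A\circ J_n)$ has the form $T_{a,r}=M_a\sigma^r$ with $a\in\mathbb F^\times$ and $0\le r<n$. It therefore suffices to show that any such stabilizing element has $r=0$ and $a=1$. I will split into the cases $r=0$ and $1\le r<n$, and apply Proposition~\ref{prop:coefficient-compatibility} in each case.

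\emph{Case $1\le r<n$.} Suppose $T_{a,r}$ stabilizes $A\circ J_n$. Proposition~\ref{prop:coefficient-compatibility} gives $\Gamma_{ij}^{\,1-2^r}=1$ for all $i,j\in S_A$. Hence $\operatorname{ord}(\Gamma_{ij})$ divides $2^r-1$ for every pair, because $\Gamma_{ij}^{1-2^r}=1$ is equivalent to $\Gamma_{ij}^{2^r-1}=1$. This contradicts condition~2, which supplies a pair $i,j$ with $\operatorname{ord}(\Gamma_{ij})\nmid 2^r-1$. So no stabilizer element has $r\neq 0$.

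\emph{Case $r=0$.} Here the coefficient relation from Proposition~\ref{prop:coefficient-compatibility} reads $a^{2^i+1}=\lambda_i^{0}=1$ for every $i\in S_A$. The set $S_A$ is nonempty because $A$ is invertible, so $A\neq 0$. The order of $a$ in the cyclic group $\mathbb F^\times$ divides $2^n-1$, and by the above it also divides $2^i+1$ for every $i\in S_A$. By condition~1 the order of $a$ therefore divides $1$, so $a=1$ and $T=I$.

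Combining the two cases gives $\operatorname{Stab}(A\circ J_n)=\{I\}$. I do not expect a real obstacle. The only points needing care are the sign convention in the exponent $1-2^r$ versus $2^r-1$, and the fact that $S_A\neq\emptyset$, which is what makes the gcd argument in the case $r=0$ meaningful.
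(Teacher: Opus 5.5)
Your proof is correct and matches the paper's argument: reduce to candidates $M_a\sigma^r$, then use Proposition~\ref{prop:coefficient-compatibility}. For $r=0$ it gives $a^{2^i+1}=1$, which condition~1 rules out unless $a=1$; for $r\ge1$ it gives $\operatorname{ord}(\Gamma_{ij})\mid 2^r-1$ for all pairs, contradicting condition~2. Your observation that $S_A\neq\varnothing$ is a harmless extra, and it is also implied by condition~1, since an empty set of $2^i+1$ values would leave the gcd equal to $2^n-1>1$.
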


\begin{proof}
Let $M_a\sigma^r\in\operatorname{Stab}(A\circ J_n)$.

If $r=0$, Proposition~\ref{prop:coefficient-compatibility} gives
\[
a^{2^i+1}=1
\]
for every $i\in S_A$. Since the order of $a$ also divides $2^n-1$, the first condition forces $a=1$.

Now suppose $1\le r<n$. Proposition~\ref{prop:coefficient-compatibility}
implies
\[
\operatorname{ord}(\Gamma_{ij})\mid 2^r-1
\]
for every $i,j\in S_A$, contradicting the second condition. Hence no
stabilizing map with $r\neq0$ exists.

Therefore the identity is the only element of
$\operatorname{Stab}(A\circ J_n)$.
\end{proof}

\begin{remark}
\label{rem:lambda-zero}
If $\lambda_0\neq0$, the first condition of
Theorem~\ref{thm:coefficient-rigidity} is automatic, since
$2^0+1=2$ whereas $2^n-1$ is odd.
\end{remark}

The criterion isolates the coefficient obstruction responsible for
deterministic rigidity. We now apply it to the AES S-box linear layer, where a single pair of
coefficients is sufficient to exclude every candidate with nonzero
Frobenius exponent.

\subsection{AES Basis Rigidity}
\label{subsec:aes-basis-rigidity}

We now apply Theorem~\ref{thm:coefficient-rigidity} to the
linear part of the AES S-box affine transformation.
Following FIPS~197~\cite{NIST197}, we use the field representation
\[
\mathbb F_{2^8}
=
\mathbb F_2[t]/(t^8+t^4+t^3+t+1)
\]
with the ordered polynomial basis
\[
E=(1,t,\ldots,t^7).
\]
We identify a byte \(\sum_{i=0}^{7}b_i2^i\) with the field
element \(\sum_{i=0}^{7}b_it^i\), where \(b_i\in\mathbb F_2\).
All byte values below are written in hexadecimal.

In this representation, the AES S-box is
\[
S_{\mathrm{AES}}(x)
=
A_{\mathrm{AES}}(J_8(x))+\texttt{63},
\]
where \(A_{\mathrm{AES}}\) is the invertible
\(\mathbb F_2\)-linear map defined by
\[
A_{\mathrm{AES}}\left(\sum_{i=0}^{7}b_it^i\right)
=
\sum_{i=0}^{7}
\bigl(b_i+b_{i+4}+b_{i+5}+b_{i+6}+b_{i+7}\bigr)t^i.
\]
Here the bit indices are taken modulo \(8\), and the coefficients
are added in \(\mathbb F_2\).

The linearized polynomial representation of this map is
\[
A_{\mathrm{AES}}(x)
=
\sum_{i=0}^{7}\lambda_i x^{2^i},
\]
with
\[
(\lambda_0,\ldots,\lambda_7)
=
(\texttt{05},\texttt{09},\texttt{F9},\texttt{25},
 \texttt{F4},\texttt{01},\texttt{B5},\texttt{8F}).
\]
These coefficients can be verified by evaluating both
representations on the basis elements \(1,t,\ldots,t^7\).

\begin{theorem}
\label{thm:aes-linear-rigidity}
The linear part \(A_{\mathrm{AES}}\) of the AES S-box affine
transformation satisfies
\[
\operatorname{Stab}(A_{\mathrm{AES}}\circ J_8)=\{I\}.
\]
In particular, \(A_{\mathrm{AES}}\circ J_8\) is basis rigid.
\end{theorem}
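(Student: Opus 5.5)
We will derive the statement from Theorem~\ref{thm:coefficient-rigidity} with $n=8$ and the coefficient vector $(\lambda_0,\ldots,\lambda_7)$ listed above. Since every $\lambda_i$ is nonzero, $S_{A_{\mathrm{AES}}}=\{0,1,\ldots,7\}$. In particular $0\in S_A$, so by Remark~\ref{rem:lambda-zero} the first hypothesis (the gcd condition) holds automatically. Hence every candidate $M_a\sigma^0$ with $r=0$ is forced to be $a=1$, i.e.\ the identity.

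For the second hypothesis we need, for each $1\le r\le 7$, a pair $i,j$ with $\operatorname{ord}(\Gamma_{ij})\nmid 2^r-1$. Because $\operatorname{ord}(\Gamma_{ij})$ divides $255=3\cdot5\cdot17$, we have $\gcd(255,2^r-1)=2^{\gcd(r,8)}-1\in\{1,3,15\}$ for $1\le r\le7$, with the largest value $15$ attained at $r=4$. Every $r$ is therefore handled at once if we find a single pair with $\operatorname{ord}(\Gamma_{ij})\nmid 15$, i.e.\ $17\mid\operatorname{ord}(\Gamma_{ij})$, equivalently $\Gamma_{ij}^{15}\ne1$, or $\Gamma_{ij}\notin\mathbb F_{16}$. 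This is the "single pair of coefficients" announced in the text.

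The natural choice is $(i,j)=(0,5)$, since $\lambda_5=\texttt{01}$. Then
\[
\Gamma_{05}=\frac{\lambda_5^{2}}{\lambda_0^{33}}=\texttt{05}^{-33}.
\]
Because $\gcd(33,255)=3$, the order of $\Gamma_{05}$ is $\operatorname{ord}(\texttt{05})/\gcd(\operatorname{ord}(\texttt{05}),33)$. In the AES field the element $\texttt{05}=t^2+1=(t+1)^2$ is the square of the primitive element $\texttt{03}$, so it is itself primitive of order $255$. Hence $\operatorname{ord}(\Gamma_{05})=255/3=85$, which does not divide $15$, and a fortiori does not divide $1$ or $3$.

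The main obstacle is verifying this field arithmetic: confirming the listed linearized coefficients and that $\texttt{03}$ (hence $\texttt{05}$) is primitive modulo $t^8+t^4+t^3+t+1$. For primitivity it suffices to check $\texttt{03}^{85}\ne1$ and $\texttt{03}^{51}\ne1$; $\texttt{03}^{15}\ne1$ follows from the first. This is a short computation, or can be quoted as the standard fact that $\texttt{03}$ generates $\mathbb F_{2^8}^\times$ in AES. If $\texttt{05}$ turned out not to be primitive, we would fall back on computing $\Gamma_{ij}^{15}$ directly for another pair, for example $(0,2)$ or $(5,j)$. With both hypotheses verified, Theorem~\ref{thm:coefficient-rigidity} gives $\operatorname{Stab}(A_{\mathrm{AES}}\circ J_8)=\{I\}$, and basis rigidity follows from \eqref{equ1}.
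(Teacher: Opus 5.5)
Your proof is correct and has the same skeleton as the paper's. Both apply Theorem~\ref{thm:coefficient-rigidity}, get condition~1 from Remark~\ref{rem:lambda-zero}, and get condition~2 from a single pair. The difference lies in which pair is used and how it is certified.

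The paper takes $(i,j)=(0,1)$ and computes $\Gamma_{01}=\texttt{E7}$ explicitly. It then checks $\texttt{E7}^{15},\texttt{E7}^{51},\texttt{E7}^{85}\neq1$ to obtain the full order $255$. This relies on concrete table arithmetic and proves more than is needed.

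You first note that $\gcd(255,2^r-1)\mid 15$ for $1\le r\le 7$, so only $17\mid\operatorname{ord}(\Gamma_{ij})$ matters. You then use $\lambda_5=\texttt{01}$, which makes $\Gamma_{05}=\lambda_0^{-33}$ a power of the single coefficient $\texttt{05}$; its order $85$ already suffices. This replaces the paper's field computation with a structural fact about $\texttt{05}$.

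One slip needs correcting. Your claim that $\texttt{03}^{15}\neq1$ follows from $\texttt{03}^{85}\neq1$ is false. An element $g$ of order $15$ has $g^{85}=g^{10}\neq1$ and $g^{51}=g^{6}\neq1$, so it passes both of your checks, yet $g^{15}=1$. Moreover, $17\mid\operatorname{ord}$ is exactly the fact your argument relies on. If you certify primitivity by computation, all three checks are required.

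Better, you do not need primitivity of $\texttt{03}$ at all. Since $\texttt{05}=(t+1)^2$ generates the same field as $t$, it has degree $8$ over $\mathbb F_2$, so $\texttt{05}\notin\mathbb F_{16}$, that is, $\texttt{05}^{15}\neq1$. Using $\texttt{05}^{255}=1$, you get $\Gamma_{05}^{15}=\texttt{05}^{-495}=\texttt{05}^{15}\neq1$. This makes your verification entirely computation-free once $\lambda_0$ and $\lambda_5$ are known.
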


\begin{proof}
Since \(\lambda_0\ne0\), the first condition of
Theorem~\ref{thm:coefficient-rigidity} holds by
Remark~\ref{rem:lambda-zero}.

For the second condition, consider the pair \((i,j)=(0,1)\).
The coefficients above give
\[
\Gamma_{01}
=
\frac{\lambda_1^2}{\lambda_0^3}
=
\texttt{E7}.
\]
Computation in the specified field gives
\[
\texttt{E7}^{15}=\texttt{35},\qquad
\texttt{E7}^{51}=\texttt{50},\qquad
\texttt{E7}^{85}=\texttt{BC}.
\]
The order of \(\Gamma_{01}\) divides
\(\lvert\mathbb F_{2^8}^{\times}\rvert=255\).
Since \(255=3\cdot5\cdot17\) and none of these powers equals \(1\),
we obtain
\[
\operatorname{ord}(\Gamma_{01})=255.
\]
Consequently, for every \(1\le r\le7\),
\[
\operatorname{ord}(\Gamma_{01})\nmid 2^r-1,
\]
because \(0<2^r-1<255\).

Both conditions of Theorem~\ref{thm:coefficient-rigidity}
are therefore satisfied, yielding
\[
\operatorname{Stab}(A_{\mathrm{AES}}\circ J_8)=\{I\}.
\]
\end{proof}

Thus the linear layer alone makes transformed inversion basis
rigid. The same conclusion holds for the full AES S-box.

\begin{corollary}
\label{cor:aes-rigidity}
\[
\operatorname{Stab}(S_{\mathrm{AES}})=\{I\}.
\]
\end{corollary}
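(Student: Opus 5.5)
The plan is to reduce the corollary to Theorem~\ref{thm:aes-linear-rigidity} by showing that the affine constant can only shrink the stabilizer. Write $S_{\mathrm{AES}}=F_{A,c}$ with $A=A_{\mathrm{AES}}$ and $c=\texttt{63}$. Since $n=8\ge4$, Theorem~\ref{thm:stabilizer-reduction} applies. It describes $\operatorname{Stab}(F_{A,c})$ as the set of candidates $M_a\sigma^r$ that satisfy two conditions: the compatibility relation $A^{-1}M_a\sigma^rA=M_{a^{-1}}\sigma^r$, and the fixed-point condition $M_a\sigma^r(c)=c$.

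Taking $c=0$ in the same theorem, $\operatorname{Stab}(A\circ J_8)$ is the set of candidates satisfying the compatibility relation alone. Hence
\[
\operatorname{Stab}(S_{\mathrm{AES}})\subseteq\operatorname{Stab}(A_{\mathrm{AES}}\circ J_8).
\]
One can also see this directly from Lemma~\ref{lem:reduction}. Any $T$ in the stabilizer of $F_{A,c}$ satisfies $J_nT=(A^{-1}TA)J_n$, and that is exactly the condition for $T$ to stabilize $A\circ J_n$. This route avoids even invoking the classification.

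By Theorem~\ref{thm:aes-linear-rigidity}, the right-hand side is $\{I\}$. The identity clearly commutes with $S_{\mathrm{AES}}$, so $\operatorname{Stab}(S_{\mathrm{AES}})=\{I\}$, and basis rigidity follows from~\eqref{equ1}.

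There is no real obstacle. The only point needing care is confirming that the $x=0$ substitution in Lemma~\ref{lem:reduction} correctly separates the constant. It does, because $J_n(0)=0$ and $A$, $T$ are linear. With that in hand, the inclusion is immediate and no new computation in $\mathbb F_{2^8}$ is required.
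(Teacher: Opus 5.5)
Your proposal is correct and matches the paper's proof. Setting $x=0$ forces $T(\texttt{63})=\texttt{63}$, the constants cancel by linearity, and so $T\in\operatorname{Stab}(A_{\mathrm{AES}}\circ J_8)=\{I\}$ by Theorem~\ref{thm:aes-linear-rigidity}; your Lemma~\ref{lem:reduction} route is exactly this argument, and the detour through Theorem~\ref{thm:stabilizer-reduction} is valid but unnecessary.
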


\begin{proof}
Let \(T\in\operatorname{Stab}(S_{\mathrm{AES}})\). Then
\[
T\bigl(A_{\mathrm{AES}}(J_8(x))+\texttt{63}\bigr)
=
A_{\mathrm{AES}}(J_8(Tx))+\texttt{63}
\]
for every \(x\in\mathbb F_{2^8}\).
Setting \(x=0\) gives \(T(\texttt{63})=\texttt{63}\).
By linearity of \(T\), the constant terms therefore cancel,
leaving
\[
T A_{\mathrm{AES}}J_8
=
A_{\mathrm{AES}}J_8 T.
\]
Hence
\[
T\in\operatorname{Stab}(A_{\mathrm{AES}}\circ J_8),
\]
and Theorem~\ref{thm:aes-linear-rigidity} implies \(T=I\).
\end{proof}

\begin{corollary}
For ordered $\mathbb F_2$-bases $B,B'$ of $\mathbb F_{2^8}$,
\[
S_{{\rm AES},B}=S_{{\rm AES},B'}
\iff
B=B'.
\]
\end{corollary}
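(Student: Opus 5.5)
This corollary follows by combining Corollary~\ref{cor:aes-rigidity} with the stabilizer characterization~\eqref{equ1}. The plan is to apply the equivalence established in Section~\ref{sec:basis-stabilizers} with \(F=S_{\mathrm{AES}}\), regarded as a map on \(\mathbb F_{2^8}\) in the reference basis \(E\). Note that \(S_{\mathrm{AES}}\) is affine-nonlinear rather than linear, but the conjugation formula \(F_B=P_B^{-1}FP_B\) and the derivation of~\eqref{equ1} only use that \(P_B\) is invertible. So~\eqref{equ1} applies verbatim: \(S_{{\rm AES},B}=S_{{\rm AES},B'}\) holds if and only if \(P_{B'}P_B^{-1}\in\operatorname{Stab}(S_{\mathrm{AES}})\).

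I will briefly re-derive this to be safe. The equality \(P_B^{-1}S P_B=P_{B'}^{-1}SP_{B'}\) is equivalent to \(S\,(P_BP_{B'}^{-1})=(P_BP_{B'}^{-1})\,S\), so \(P_BP_{B'}^{-1}\in\operatorname{Stab}(S)\). Since \(\operatorname{Stab}(S)\) is a group, this holds exactly when the inverse \(P_{B'}P_B^{-1}\) lies in it. By Corollary~\ref{cor:aes-rigidity} the stabilizer is trivial, so the condition becomes \(P_{B'}=P_B\).

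The final step is to check that \(B\mapsto P_B\) is injective. The \(k\)-th column of \(P_B\) is \([b_k]_E\), because \([b_k]_B\) is the \(k\)-th standard vector. Hence \(P_B=P_{B'}\) forces \(b_k=b'_k\) for all \(k\), giving \(B=B'\). The converse direction \(B=B'\Rightarrow S_{{\rm AES},B}=S_{{\rm AES},B'}\) is trivial.

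There is no real obstacle here. The only point needing a moment's care is confirming that~\eqref{equ1} was stated for arbitrary maps \(F\), not just linear ones, so that it covers the affine constant \(\texttt{63}\). The substantive content is already in Corollary~\ref{cor:aes-rigidity}.
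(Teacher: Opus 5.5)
Your proposal is correct and follows the paper's own proof: apply \eqref{equ1} with $F=S_{\mathrm{AES}}$, use $\operatorname{Stab}(S_{\mathrm{AES}})=\{I\}$ from Corollary~\ref{cor:aes-rigidity} to get $P_{B'}=P_B$, and conclude $B=B'$. Your extra checks make explicit what the paper leaves implicit: that \eqref{equ1} needs no linearity of $F$, that the stabilizer is closed under inverses, and that $B\mapsto P_B$ is injective via the columns $[b_k]_E$.
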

\begin{proof}
By the basis-equivalence criterion,
\[
S_{{\rm AES},B}=S_{{\rm AES},B'}
\iff
P_{B'}P_B^{-1}\in\operatorname{Stab}(S_{\rm AES}).
\]
Since $\operatorname{Stab}(S_{\rm AES})=\{I\}$, we obtain
$P_{B'}=P_B$, hence $B=B'$. The converse is immediate.
\end{proof}

The AES result shows that the AES S-box linear layer completely
eliminates the Frobenius redundancy of inversion. This motivates the
broader question addressed in the next section. How often does such
rigidity occur when the outer linear transformation varies?

\section{Generic Rigidity}
\label{sec:generic-rigidity}
The AES result shows that a fixed outer linear transformation can make
transformed inversion basis rigid. In this section, we show that such
rigidity is generic when the outer invertible linear transformation
varies. Our argument reduces the existence of a nontrivial linear
symmetry to a conjugacy problem for the semilinear candidates arising
from two sided linear equivalences of inversion. This leads to an
asymptotic rigidity theorem and sharper finite dimensional bounds.

\subsection{Stabilizer Events as Conjugacy Conditions}
\label{stabilizer_as_conjugacy}

Recall the notation of Section~2, and let
\[
G=\mathrm{GL}_n(\mathbb F_2),
\qquad
F_A:=F_{A,0}=A\circ J_n,
\qquad A\in G.
\]
For the semilinear candidates $T_{a,r}=M_a\sigma^r$, define
\[
U_{a,r}:=M_{a^{-1}}\sigma^r.
\]

By Theorem~\ref{thm:stabilizer-reduction}, applied with $c=0$, we have
\[
\operatorname{Stab}(F_A)
=
\left\{
T_{a,r}:
A^{-1}T_{a,r}A=U_{a,r}
\right\}.
\]
Since $T_{1,0}=I$, the stabilizer is nontrivial precisely when
\[
A^{-1}T_{a,r}A=U_{a,r}
\]
for some $(a,r)\neq(1,0)$.

For $T,U\in G$, define the conjugator set
\[
\mathcal C(T,U)
=
\{A\in G:A^{-1}TA=U\},
\]
and write
\[
C_G(T)
=
\{C\in G:CT=TC\}
\]
for the group centralizer of $T$ in $G$.

\begin{lemma}
\label{lem:conjugator-cosets}
Let $T,U\in G$. If $T$ and $U$ are not conjugate in $G$, then
\[
\mathcal C(T,U)=\varnothing.
\]
Otherwise, for any $A_0\in\mathcal C(T,U)$,
\[
\mathcal C(T,U)
=
A_0C_G(U)
=
C_G(T)A_0.
\]
In particular,
\[
|\mathcal C(T,U)|
=
|C_G(T)|
=
|C_G(U)|.
\]
\end{lemma}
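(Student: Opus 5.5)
The argument is elementary group theory, carried out directly from the definition of $\mathcal C(T,U)$. First, the emptiness claim is immediate. If $\mathcal C(T,U)\neq\varnothing$, then any $A$ in it satisfies $A^{-1}TA=U$, so $T$ and $U$ are conjugate in $G$. Contrapositively, non-conjugacy forces $\mathcal C(T,U)=\varnothing$.

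Now fix $A_0\in\mathcal C(T,U)$, so $A_0^{-1}TA_0=U$, equivalently $TA_0=A_0U$. I will prove the two coset descriptions by double inclusion.

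For $\mathcal C(T,U)=A_0C_G(U)$, take $A\in\mathcal C(T,U)$ and set $C=A_0^{-1}A\in G$. Then
\[
C^{-1}UC=A^{-1}A_0UA_0^{-1}A=A^{-1}TA=U,
\]
so $C\in C_G(U)$ and $A=A_0C$. Conversely, for $C\in C_G(U)$ we have
\[
(A_0C)^{-1}T(A_0C)=C^{-1}UC=U,
\]
so $A_0C\in\mathcal C(T,U)$.

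Symmetrically, for $\mathcal C(T,U)=C_G(T)A_0$, take $A\in\mathcal C(T,U)$ and set $D=AA_0^{-1}$. Using $A_0UA_0^{-1}=T$ and $AUA^{-1}=T$,
\[
DTD^{-1}=AA_0^{-1}TA_0A^{-1}=AUA^{-1}=T,
\]
so $D\in C_G(T)$. Conversely, for $D\in C_G(T)$,
\[
(DA_0)^{-1}T(DA_0)=A_0^{-1}D^{-1}TDA_0=A_0^{-1}TA_0=U.
\]

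For the cardinality statement, left multiplication by $A_0$ and right multiplication by $A_0$ are bijections of $G$. Hence
\[
|\mathcal C(T,U)|=|A_0C_G(U)|=|C_G(U)|
\qquad\text{and}\qquad
|\mathcal C(T,U)|=|C_G(T)A_0|=|C_G(T)|.
\]
This is stated under the conjugate case; in that case it also follows from $C_G(U)=A_0^{-1}C_G(T)A_0$.

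No step here is a real obstacle. The only care needed is the side on which each centralizer appears: $C_G(U)$ on the right of $A_0$, and $C_G(T)$ on the left. Getting this right comes down to tracking the relation $TA_0=A_0U$ consistently in each computation.
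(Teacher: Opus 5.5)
Your proof is correct and follows essentially the same route as the paper: writing $A=A_0C$ (respectively $A=DA_0$) shows that membership in $\mathcal C(T,U)$ is equivalent to membership in $C_G(U)$ (respectively $C_G(T)$), and the cardinality identity then follows from the coset bijections or from $C_G(U)=A_0^{-1}C_G(T)A_0$. You also correctly observe that the cardinality identity holds only in the conjugate case, a restriction the paper leaves implicit.
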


\begin{proof}
Suppose $A_0^{-1}TA_0=U$ and write $A=A_0C$. Then
\[
A^{-1}TA=U
\iff
C^{-1}UC=U,
\]
so
\[
\mathcal C(T,U)=A_0C_G(U).
\]
The second coset description follows similarly, and conjugation by
$A_0$ identifies $C_G(T)$ with $C_G(U)$.
\end{proof}

Consequently, if $A$ is chosen uniformly from $G$, then
\[
\Pr_A(A^{-1}TA=U)
=
\begin{cases}
\dfrac{|C_G(T)|}{|G|}, & T\sim U,\\[1ex]
0, & T\not\sim U,
\end{cases}
\]
where $\sim$ denotes conjugacy in $G$.

Let
\[
p_n
=
\Pr_A\!\left(
\operatorname{Stab}(F_A)\neq\{I\}
\right).
\]
Applying the union bound over all nonidentity semilinear candidates gives
\begin{equation}
\label{equpn}
p_n
\le
\frac{1}{|G|}
\sum_{\substack{a\in\mathbb F^\times,\ 0\le r<n\\
(a,r)\neq(1,0)}}
\mathbf 1_{\{T_{a,r}\sim U_{a,r}\}}
\,|C_G(T_{a,r})|.
\end{equation}

Thus the generic rigidity problem reduces to determining when
$T_{a,r}$ and $U_{a,r}$ are conjugate and controlling the size of their
centralizers. We first address the conjugacy condition.

\subsection{Conjugacy of Semilinear Candidates}
\label{subsec:semilinear-conjugacy}

Equation~\eqref{equpn} shows that a semilinear candidate contributes to the
exceptional probability only when $T_{a,r}$ and $U_{a,r}$ are conjugate
in $G$. We now characterize exactly when this occurs. We use standard facts about finite fields, Frobenius automorphisms, and
relative norms~\cite{lidl1997finite}.

For $0\le r<n$, define
\[
d_r=\gcd(n,r),
\qquad
K_r=\mathbb F_{2^{d_r}},
\]
where $\gcd(n,0)=n$, and let
\[
z_{a,r}
=
N_{\mathbb F/K_r}(a)
\]
denote the norm of $a$ from the extension
$\mathbb F/K_r$.
For $r=0$, the norm is the identity, so $z_{a,0}=a$. We first have two elementary observations.

\begin{lemma}
\label{lem:semilinear-power}
For every $a\in\mathbb F^\times$ and $0\le r<n$,
\[
T_{a,r}^{\,n/d_r}=M_{z_{a,r}},
\qquad
U_{a,r}^{\,n/d_r}=M_{z_{a,r}^{-1}}.
\]
\end{lemma}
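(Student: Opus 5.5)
We compute the powers of $T_{a,r}=M_a\sigma^r$ directly and then identify the resulting scalar as a relative norm. First we record the commutation rule $\sigma^r M_b=M_{b^{2^r}}\sigma^r$, which holds because $(bx)^{2^r}=b^{2^r}x^{2^r}$. Applying it repeatedly, an induction on $k\ge1$ gives
\[
T_{a,r}^{\,k}
=
M_{a\cdot a^{2^r}\cdots a^{2^{r(k-1)}}}\,\sigma^{rk}
=
M_{\prod_{j=0}^{k-1}a^{2^{rj}}}\,\sigma^{rk}.
\]
The inductive step just moves the new factor $\sigma^r$ past the accumulated multiplier.

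Next we take $k=n/d_r$. Then $rk=n\cdot(r/d_r)$ is a multiple of $n$, so $\sigma^{rk}=I$ because $\sigma^n=I$ on $\mathbb F$. The case $r=0$ is trivial: there $d_0=n$, $k=1$, and $T_{a,0}=M_a=M_{z_{a,0}}$. For $1\le r<n$ we still need
\[
\prod_{j=0}^{n/d_r-1}a^{2^{rj}}=N_{\mathbb F/K_r}(a)
=\prod_{i=0}^{n/d_r-1}a^{2^{d_r i}}.
\]
The Galois group of $\mathbb F/K_r$ is cyclic of order $n/d_r$, generated by $x\mapsto x^{2^{d_r}}$. Since $\gcd(r/d_r,\,n/d_r)=1$, the element $\sigma^r$ restricted to $\mathbb F$ also generates this group. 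Hence $\{\sigma^{rj}:0\le j<n/d_r\}$ runs over the whole Galois group exactly once, and the product over it is the norm. Equivalently, the exponents $rj \bmod n$ for $0\le j<n/d_r$ are exactly the multiples of $d_r$ modulo $n$, each appearing once.

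For $U_{a,r}=M_{a^{-1}}\sigma^r$ we apply the same computation with $a$ replaced by $a^{-1}$. This gives $U_{a,r}^{\,n/d_r}=M_{N_{\mathbb F/K_r}(a^{-1})}$, and multiplicativity of the norm gives $N_{\mathbb F/K_r}(a^{-1})=z_{a,r}^{-1}$.

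The only step needing care is the identification of the product with the norm, that is, checking that the subgroup generated by $\sigma^r$ is exactly $\mathrm{Gal}(\mathbb F/K_r)$ with order $n/d_r$. This is the standard fact that $\langle r\rangle=\langle d_r\rangle$ in $\mathbb Z/n\mathbb Z$. Everything else is routine bookkeeping with the commutation rule.
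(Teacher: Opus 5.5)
Your proposal is correct and follows the paper's route. You expand $T_{a,r}^{\,n/d_r}$ with the rule $\sigma^r M_b=M_{b^{2^r}}\sigma^r$, use $\sigma^{r n/d_r}=I$, and identify $\prod_{j=0}^{n/d_r-1}a^{2^{rj}}$ with $N_{\mathbb F/K_r}(a)$ because $\sigma^r$ generates $\mathrm{Gal}(\mathbb F/K_r)$, then replace $a$ by $a^{-1}$ for $U_{a,r}$; you only make explicit the induction and the fact $\langle r\rangle=\langle d_r\rangle$ in $\mathbb Z/n\mathbb Z$, which the paper's one-line proof leaves implicit.
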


\begin{proof}
Let $m=n/d_r$. Since $\sigma^r$ generates
$\operatorname{Gal}(\mathbb F/K_r)$,
\[
T_{a,r}^{\,m}
=
M_{a\,a^{2^r}\cdots a^{2^{(m-1)r}}}
=
M_{N_{\mathbb F/K_r}(a)}
=
M_{z_{a,r}}.
\]
The second identity follows by replacing $a$ with $a^{-1}$.
\end{proof}

\begin{lemma}
\label{lem:multiplication-conjugacy}
Let $z,w\in K_r^\times$. Then the multiplication maps $M_z$ and
$M_w$ are conjugate in $\mathrm{GL}_n(\mathbb F_2)$ if and only if
\[
w=z^{2^k}
\]
for some $0\le k<d_r$.
\end{lemma}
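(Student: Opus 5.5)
The proof compares minimal polynomials, using the rational canonical form over $\mathbb F_2$. For $z\in\mathbb F^\times$, the map $M_z$ acting on $\mathbb F\cong\mathbb F_2^n$ satisfies $f(M_z)=M_{f(z)}$ for every $f\in\mathbb F_2[X]$. Hence the minimal polynomial of $M_z$ over $\mathbb F_2$ is the minimal polynomial $m_z$ of the element $z$ over $\mathbb F_2$, which is irreducible of degree $e=[\mathbb F_2(z):\mathbb F_2]$. Moreover, $\mathbb F$ is a vector space over $\mathbb F_2(z)$ of dimension $n/e$, and $M_z$ acts on it through the field action. So as an $\mathbb F_2[X]$-module, with $X$ acting as $M_z$, the space $\mathbb F$ is isomorphic to $(\mathbb F_2[X]/(m_z))^{n/e}$. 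Its rational canonical form is therefore $n/e$ companion blocks of $m_z$, and the characteristic polynomial is $m_z^{\,n/e}$.

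For the forward direction, suppose $M_z$ and $M_w$ are conjugate in $\mathrm{GL}_n(\mathbb F_2)$. Then they have the same minimal polynomial, so $m_z=m_w$, and $w$ is a root of $m_z$. The roots of $m_z$ are exactly the Galois conjugates $z^{2^k}$. Since $z\in K_r$, these conjugates already arise for $0\le k<d_r$, because $z^{2^{d_r}}=z$. Thus $w=z^{2^k}$ with $0\le k<d_r$.

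For the converse, suppose $w=z^{2^k}$. There are two ways to finish, and I would give the direct one with the canonical-form one as a check.
\begin{itemize}
\item Conjugating by $\sigma^k\in\mathrm{GL}_n(\mathbb F_2)$ gives, for all $x$,
\[
\sigma^kM_z\sigma^{-k}(x)=\bigl(z\,x^{2^{-k}}\bigr)^{2^k}=z^{2^k}x=w\,x,
\]
so $\sigma^kM_z\sigma^{-k}=M_w$.
\item Alternatively, $m_z=m_w$ and both maps are semisimple with $n/e$ identical blocks, so they have the same rational canonical form and are conjugate.
\end{itemize}

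The only step needing care is the module description used in the forward direction, namely that the minimal polynomial of $M_z$ is $m_z$. This follows because $f(M_z)=0$ if and only if $f(z)=0$, since $M_{f(z)}$ is the zero map exactly when $f(z)=0$. With that in hand, both directions are straightforward.
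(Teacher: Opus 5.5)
Your proof is correct and follows essentially the same route as the paper: the minimal polynomial of $M_z$ is that of $z$ over $\mathbb F_2$, with $n/e_z$ identical rational canonical blocks. Conjugacy of $M_z$ and $M_w$ therefore amounts to $z$ and $w$ lying in the same Frobenius orbit, which inside $K_r$ is indexed by $0\le k<d_r$. Your explicit conjugator $\sigma^k$ for the converse is just a more concrete form of the paper's canonical-form argument; the paper uses the same identity $\sigma^kM_a\sigma^{-k}=M_{a^{2^k}}$ in its proof of Theorem~\ref{thm:semilinear-conjugacy}.
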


\begin{proof}
Let
\[
e_z=[\mathbb F_2(z):\mathbb F_2].
\]
As an $\mathbb F_2$-linear transformation, $M_z$ has minimal polynomial
equal to the minimal polynomial of $z$ over $\mathbb F_2$, with each
corresponding rational canonical block occurring $n/e_z$ times.
Consequently, $M_z$ and $M_w$ are conjugate if and only if $z$ and $w$
have the same minimal polynomial over $\mathbb F_2$.

Over a finite field, this is equivalent to $z$ and $w$ belonging to the
same Frobenius orbit. Since $z,w\in K_r=\mathbb F_{2^{d_r}}$, this means
\[
w=z^{2^k}
\]
for some $0\le k<d_r$.
\end{proof}

We can now determine the conjugacy condition appearing in~\eqref{equpn}.

\begin{theorem}
\label{thm:semilinear-conjugacy}
Let $a\in\mathbb F^\times$ and $0\le r<n$. Then
\[
T_{a,r}\sim U_{a,r}
\iff
z_{a,r}^{-1}=z_{a,r}^{2^k}
\text{ for some }0\le k<d_r
\iff
z_{a,r}^{2^k+1}=1
\text{ for some }0\le k<d_r.
\]
\end{theorem}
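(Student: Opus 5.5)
The plan is to prove the two implications of the first equivalence separately. The necessity direction will come from the power map of Lemma~\ref{lem:semilinear-power} combined with Lemma~\ref{lem:multiplication-conjugacy}. The sufficiency direction will come from writing down an explicit conjugator built from a multiplication map and a Frobenius power, using Hilbert's Theorem~90. The second equivalence is immediate: since $z_{a,r}\neq0$, the identity $z_{a,r}^{-1}=z_{a,r}^{2^k}$ is equivalent to $z_{a,r}^{2^k+1}=1$ after multiplying by $z_{a,r}$.

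\emph{Necessity.} Suppose $A^{-1}T_{a,r}A=U_{a,r}$ for some $A\in G$, and set $m=n/d_r$. Raising both sides to the $m$-th power and applying Lemma~\ref{lem:semilinear-power} gives $A^{-1}M_{z_{a,r}}A=M_{z_{a,r}^{-1}}$. The norm $z_{a,r}$ lies in $K_r^\times$, and hence so does $z_{a,r}^{-1}$. Lemma~\ref{lem:multiplication-conjugacy} therefore yields $z_{a,r}^{-1}=z_{a,r}^{2^k}$ for some $0\le k<d_r$.

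\emph{Sufficiency.} We use two conjugation identities. For $b\in\mathbb F^\times$, a direct computation gives
\[
M_b^{-1}M_{a'}\sigma^rM_b=M_{a'b^{2^r-1}}\sigma^r .
\]
For any $j$, we also have
\[
\sigma^{-j}M_{a'}\sigma^r\sigma^{j}=M_{a'^{2^{-j}}}\sigma^r .
\]
Next, $\sigma^r$ generates $\mathrm{Gal}(\mathbb F/K_r)$, since $\sigma^r$ has order $n/d_r$. Hilbert's Theorem~90 then says that $u\in\mathbb F^\times$ satisfies $N_{\mathbb F/K_r}(u)=1$ exactly when $u=\sigma^r(b)/b=b^{2^r-1}$ for some $b\in\mathbb F^\times$. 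Combined with the first identity, this shows that $M_{a'}\sigma^r$ and $M_{a''}\sigma^r$ are conjugate whenever $N_{\mathbb F/K_r}(a')=N_{\mathbb F/K_r}(a'')$.

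Now assume $z_{a,r}^{-1}=z_{a,r}^{2^k}$. The norm is multiplicative and commutes with Frobenius, so
\[
N_{\mathbb F/K_r}(a^{-1})=z_{a,r}^{-1}=z_{a,r}^{2^k}=N_{\mathbb F/K_r}(a^{2^k}).
\]
Hence $U_{a,r}=M_{a^{-1}}\sigma^r$ is conjugate to $M_{a^{2^k}}\sigma^r$. Applying the second identity with $j=-k$ conjugates $M_{a^{2^k}}\sigma^r$ to $M_a\sigma^r=T_{a,r}$. By transitivity, $T_{a,r}\sim U_{a,r}$.

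The only step that needs care is the sufficiency direction. There one must check that Hilbert~90 applies to the cyclic extension $\mathbb F/K_r$ with generator $\sigma^r$, rather than with generator $\sigma^{d_r}$. It also helps to confirm the exponent conventions in the two identities directly on an element $x$, writing $\sigma^r(bx)=b^{2^r}x^{2^r}$. The remaining steps are routine.
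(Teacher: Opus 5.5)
Your proposal is correct and follows essentially the same route as the paper: the $(n/d_r)$th power combined with Lemma~\ref{lem:multiplication-conjugacy} gives necessity, and for sufficiency a Frobenius conjugation is combined with an $M_b$ conjugation supplied by Hilbert's Theorem~90 for $\mathbb F/K_r$ with generator $\sigma^r$. The paper treats $r=0$ separately, whereas your uniform argument also covers it, since there the Hilbert~90 step degenerates to $a'=a''$. One harmless sign slip: in your identity $\sigma^{-j}M_{a'}\sigma^r\sigma^{j}=M_{a'^{2^{-j}}}\sigma^r$ with $a'=a^{2^k}$ you need $j=k$ rather than $j=-k$ (alternatively, $j=-k$ applied to $a'=a$ gives the paper's $\sigma^kT_{a,r}\sigma^{-k}=M_{a^{2^k}}\sigma^r$), and since conjugacy is symmetric the conclusion is unaffected.
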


\begin{proof}
Suppose first that
\[
T_{a,r}\sim U_{a,r}.
\]
Taking the $(n/d_r)$th power preserves conjugacy, so
Lemma~\ref{lem:semilinear-power} gives
\[
M_{z_{a,r}}\sim M_{z_{a,r}^{-1}}.
\]
Lemma~\ref{lem:multiplication-conjugacy} therefore implies
\[
z_{a,r}^{-1}=z_{a,r}^{2^k}
\]
for some $0\le k<d_r$.

Conversely, suppose
\[
z_{a,r}^{-1}=z_{a,r}^{2^k}
\]
for some $0\le k<d_r$.

If $r=0$, then $z_{a,0}=a$, and
\[
a^{-1}=a^{2^k}.
\]
Since
\[
\sigma^k M_a\sigma^{-k}=M_{a^{2^k}},
\]
we immediately obtain
\[
M_a\sim M_{a^{-1}}.
\]

Now assume $1\le r<n$. Since Frobenius powers commute,
\[
\sigma^k T_{a,r}\sigma^{-k}
=
M_{a^{2^k}}\sigma^r.
\]
Moreover,
\[
N_{\mathbb F/K_r}(a^{2^k})
=
z_{a,r}^{2^k}
=
z_{a,r}^{-1}
=
N_{\mathbb F/K_r}(a^{-1}).
\]
Hence
\[
N_{\mathbb F/K_r}
\left(
\frac{a^{-1}}{a^{2^k}}
\right)
=1.
\]
By the finite field form of Hilbert's Theorem~90, there exists
$b\in\mathbb F^\times$ such that
\[
b^{2^r-1}
=
\frac{a^{-1}}{a^{2^k}}.
\]
Therefore
\[
M_b^{-1}
\bigl(M_{a^{2^k}}\sigma^r\bigr)
M_b
=
M_{a^{2^k}b^{2^r-1}}\sigma^r
=
M_{a^{-1}}\sigma^r
=
U_{a,r}.
\]
Thus $T_{a,r}$ and $U_{a,r}$ are conjugate.
\end{proof}

The theorem gives an explicit criterion for deciding which semilinear
candidates can contribute to the exceptional set. In particular, the
conjugacy indicator in~\eqref{equpn} can now be written entirely in terms of the
relative norm $z_{a,r}$.

\begin{corollary}
\label{cor:conjugacy-indicator}
The conjugacy indicator in~\eqref{equpn} is given explicitly by
\[
\mathbf 1_{\{T_{a,r}\sim U_{a,r}\}}
=
\mathbf 1_{\{
z_{a,r}^{-1}
=
z_{a,r}^{2^k}
\text{ for some }0\le k<d_r
\}}.
\]
Equivalently, a candidate contributes only if
\[
\operatorname{ord}(z_{a,r})
\mid 2^k+1
\]
for some $0\le k<d_r$.
\end{corollary}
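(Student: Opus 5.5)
The first identity follows directly from Theorem~\ref{thm:semilinear-conjugacy}, so the plan is to substitute that criterion into the indicator in~\eqref{equpn} and then rewrite the norm condition in terms of orders. By the theorem, $T_{a,r}\sim U_{a,r}$ holds exactly when $z_{a,r}^{-1}=z_{a,r}^{2^k}$ for some $0\le k<d_r$. The indicator $\mathbf 1_{\{T_{a,r}\sim U_{a,r}\}}$ is therefore the indicator of this norm condition. The only thing to check is that the range of $k$ is the one stated: $k$ runs over $0\le k<d_r$ because $z_{a,r}$ lies in $K_r=\mathbb F_{2^{d_r}}$, and the theorem is stated with exactly that range.

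For the equivalent formulation, I would clear the inverse. For $z\in K_r^\times$ and a fixed $k$, multiplying by $z$ turns $z^{-1}=z^{2^k}$ into $z^{2^k+1}=1$. This is the second equivalence already recorded in Theorem~\ref{thm:semilinear-conjugacy}. In the finite cyclic group $\mathbb F^\times$, the identity $z^{m}=1$ holds if and only if $\operatorname{ord}(z)\mid m$. Applying this with $m=2^k+1$ gives the condition $\operatorname{ord}(z_{a,r})\mid 2^k+1$ for some $0\le k<d_r$.

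Finally, a candidate $(a,r)$ contributes a nonzero term to the union bound~\eqref{equpn} only if its indicator equals $1$, since the factor $|C_G(T_{a,r})|$ is always positive. This gives the ``contributes only if'' phrasing.

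There is no real obstacle here: every step is a direct rewriting of Theorem~\ref{thm:semilinear-conjugacy} together with the basic fact relating element orders to exponents in a cyclic group.
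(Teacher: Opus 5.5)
Your proposal is correct and matches the paper, which gives no separate proof and treats the corollary as an immediate restatement of Theorem~\ref{thm:semilinear-conjugacy}. You substitute the norm criterion into the indicator and translate $z_{a,r}^{2^k+1}=1$ into $\operatorname{ord}(z_{a,r})\mid 2^k+1$, which is exactly the intended argument.
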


Theorem~\ref{thm:semilinear-conjugacy} makes the conjugacy
condition in~\eqref{equpn} explicit in terms of the relative
norm \(z_{a,r}\). It characterizes exactly which semilinear
candidates can stabilize \(A\circ J_n\) for some invertible
outer linear map \(A\). 
To bound the contributions of the admissible candidates,
we next estimate the sizes of their group centralizers.

\subsection{Centralizers of Semilinear Maps}
\label{centralizers_semilinear_maps}
The conjugacy criterion determines which semilinear candidates can
contribute to the exceptional set. To count the outer linear maps
that realize each admissible conjugacy, we now study their centralizers.
For classical work on centralizers of semilinear transformations, see~\cite{smith1976centralizer}.
Here we require an explicit dimension formula for the specific
semilinear maps
\[
T_{a,r}=M_a\sigma^r
\]
arising from inversion. The maps $T_{a,r}(x)=a x^{2^r}$ are semilinear with respect to the
Frobenius automorphism \(\sigma^r\) of $\mathbb F$, while they are linear over
$\mathbb F_2$.

\begin{definition}
\label{def:centralizer-algebra}
For $T\in G$, define
\[
Z(T)
=
\{L\in\operatorname{End}_{\mathbb F_2}(\mathbb F):LT=TL\}.
\]
We call $Z(T)$ the \emph{centralizer algebra} of $T$.
Its invertible elements form the group centralizer
\[
C_G(T)=Z(T)\cap G.
\]
\end{definition}

Thus
\[
|C_G(T)|
\le
|Z(T)|
=
2^{\dim_{\mathbb F_2}Z(T)}.
\]
We next determine the dimension of $Z(T_{a,r})$.

\begin{lemma}
\label{lem:multiplication-centralizer}
Let $a\in\mathbb F^\times$, and put
\[
e_a=[\mathbb F_2(a):\mathbb F_2].
\]
Then
\[
Z(M_a)
=
\operatorname{End}_{\mathbb F_{2^{e_a}}}(\mathbb F),
\]
and hence
\[
\dim_{\mathbb F_2}Z(M_a)
=
\frac{n^2}{e_a}.
\]
\end{lemma}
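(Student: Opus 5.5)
The plan is to show first that commuting with $M_a$ is the same as commuting with multiplication by every element of the subfield $\mathbb F_2(a)=\mathbb F_{2^{e_a}}$. Once that is in hand, $Z(M_a)$ is exactly the set of $\mathbb F_{2^{e_a}}$-linear endomorphisms of $\mathbb F$, and the dimension follows by a count.

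The first step is to compare $Z(M_a)$ with $Z(M_y)$ for $y\in\mathbb F_2(a)$. Since $a\neq0$ and $\mathbb F$ is finite, $\mathbb F_2(a)=\mathbb F_2[a]$, so every $y\in\mathbb F_2(a)$ can be written as $y=g(a)$ for some polynomial $g\in\mathbb F_2[t]$. Then $M_y=g(M_a)$ as $\mathbb F_2$-linear maps. Hence any $L$ with $LM_a=M_aL$ also commutes with $M_y$, that is,
\[
L(yx)=yL(x)\qquad\text{for all } x\in\mathbb F,\ y\in\mathbb F_2(a).
\]
This says precisely that $L$ is $\mathbb F_{2^{e_a}}$-linear. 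Conversely, any $\mathbb F_{2^{e_a}}$-linear map commutes with $M_a$, because $a\in\mathbb F_{2^{e_a}}$. This establishes
\[
Z(M_a)=\operatorname{End}_{\mathbb F_{2^{e_a}}}(\mathbb F).
\]

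The second step is the dimension count. Since $e_a\mid n$, the field $\mathbb F$ is a vector space of dimension $m=n/e_a$ over $\mathbb F_{2^{e_a}}$. Therefore $\operatorname{End}_{\mathbb F_{2^{e_a}}}(\mathbb F)\cong M_m(\mathbb F_{2^{e_a}})$, which has dimension $m^2$ over $\mathbb F_{2^{e_a}}$. Multiplying by $e_a$ to pass to $\mathbb F_2$ gives
\[
\dim_{\mathbb F_2}Z(M_a)=e_a\cdot\frac{n^2}{e_a^2}=\frac{n^2}{e_a}.
\]

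As a consistency check, the same count can be read off from the rational canonical form used in Lemma~\ref{lem:multiplication-conjugacy}. There $M_a$ is a direct sum of $m$ copies of the companion matrix of an irreducible polynomial of degree $e_a$. The centralizer of such a sum is $M_m$ over the splitting field $\mathbb F_{2^{e_a}}$, which again has $\mathbb F_2$-dimension $e_am^2$.

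There is no real obstacle in this argument. The only point needing care is the first step: commuting with $M_a$ alone is enough to give $\mathbb F_2(a)$-linearity, and this rests on writing each element of $\mathbb F_2(a)$ as a polynomial in $a$.
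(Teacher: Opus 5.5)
Your proof is correct and is essentially the paper's argument. Both show that commuting with $M_a$ is equivalent to commuting with multiplication by all of $\mathbb F_2[a]=\mathbb F_{2^{e_a}}$, that is, to $\mathbb F_{2^{e_a}}$-linearity, and then count $e_a(n/e_a)^2=n^2/e_a$; your rational canonical form cross-check is a valid extra that the paper does not include.
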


\begin{proof}
An $\mathbb F_2$-linear map $L$ commutes with $M_a$ if and only if it
commutes with multiplication by every element of
\[
\mathbb F_2[a]=\mathbb F_2(a)=\mathbb F_{2^{e_a}}.
\]
Equivalently, $L$ is $\mathbb F_{2^{e_a}}$-linear. Hence
\[
Z(M_a)
=
\operatorname{End}_{\mathbb F_{2^{e_a}}}(\mathbb F).
\]

Since $\mathbb F$ has dimension $n/{e_a}$ over $\mathbb F_{2^{e_a}}$,
the endomorphism algebra has dimension $(n/{e_a})^2$ over
$\mathbb F_{2^{e_a}}$. Therefore its dimension over $\mathbb F_2$ is
\[
e_a\left(\frac{n}{e_a}\right)^2
=
\frac{n^2}{e_a}.
\]
\end{proof}

The case $r\neq0$ is controlled by the relative norm introduced in
subsection~\ref{subsec:semilinear-conjugacy}.

\begin{lemma}
\label{lem:twisted-centralizer}
Let $a\in\mathbb F^\times$ and $1\le r<n$. Write
\[
d_r=\gcd(n,r),
\qquad
z_{a,r}=N_{\mathbb F/\mathbb F_{2^{d_r}}}(a),
\]
and let
\[
e_{a,r}
=
[\mathbb F_2(z_{a,r}):\mathbb F_2].
\]
Then
\[
\dim_{\mathbb F_2} Z(T_{a,r})
=
\frac{n d_r}{e_{a,r}}.
\]
\end{lemma}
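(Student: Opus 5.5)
We compute $Z(T_{a,r})$ directly from the linearized polynomial representation. Every $L\in\operatorname{End}_{\mathbb F_2}(\mathbb F)$ has a unique expression
\[
L=\sum_{j=0}^{n-1}M_{c_j}\sigma^j,\qquad c_j\in\mathbb F,
\]
with exponents read modulo $n$. The plan is to show that the commutation condition decouples into one equation per Frobenius exponent $j$. Then we solve each equation with Hilbert's Theorem~90, as in the proof of Theorem~\ref{thm:semilinear-conjugacy}, and add up the dimensions.

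\textbf{Step 1 (decoupling).} Using $\sigma^j M_a=M_{a^{2^j}}\sigma^j$, we get
\[
LT_{a,r}=\sum_j M_{c_ja^{2^j}}\sigma^{j+r},
\qquad
T_{a,r}L=\sum_j M_{a c_j^{2^r}}\sigma^{j+r}.
\]
By uniqueness of the representation, $L\in Z(T_{a,r})$ if and only if
\[
c_j^{2^r}=c_j\,b_j\quad\text{for every }j,\qquad b_j:=a^{2^j-1}.
\]
Hence $Z(T_{a,r})=\bigoplus_{j=0}^{n-1}V_j$, where
\[
V_j=\{c\in\mathbb F:c^{2^r}=c\,b_j\}.
\]
Each $V_j$ is an $\mathbb F_2$-subspace, and in fact a $K_r$-subspace, because $\lambda^{2^r}=\lambda$ for $\lambda\in K_r=\mathbb F_{2^{d_r}}$.

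\textbf{Step 2 (solving each summand).} The nonzero elements of $V_j$ are exactly the solutions of $c^{2^r-1}=b_j$. Since $\sigma^r$ generates $\operatorname{Gal}(\mathbb F/K_r)$, Hilbert's Theorem~90 for this cyclic extension gives the following.
\begin{itemize}
\item A solution exists if and only if $N_{\mathbb F/K_r}(b_j)=1$.
\item Two solutions differ by a factor in $\ker(c\mapsto c^{2^r-1})=K_r^\times$.
\end{itemize}
Thus $V_j$ is either $0$ or $c_0K_r$ for a single nonzero $c_0$, so $\dim_{\mathbb F_2}V_j\in\{0,d_r\}$. The norm commutes with Frobenius, so
\[
N_{\mathbb F/K_r}(b_j)=z_{a,r}^{2^j-1}.
\]
Hence $V_j\neq0$ if and only if $z_{a,r}^{2^j}=z_{a,r}$. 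This holds if and only if $z_{a,r}\in\mathbb F_{2^{\gcd(j,n)}}$, which is equivalent to $e_{a,r}\mid j$ (here $e_{a,r}\mid d_r\mid n$). For $j=0$ this is automatic, consistent with $V_0=K_r$.

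\textbf{Step 3 (counting).} The number of $j\in\{0,\dots,n-1\}$ with $e_{a,r}\mid j$ is $n/e_{a,r}$. Summing the contributions from Step~2 gives
\[
\dim_{\mathbb F_2}Z(T_{a,r})=d_r\cdot\frac{n}{e_{a,r}}.
\]

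The main obstacle is Step~2. We must apply Hilbert~90 with the generator $\sigma^r$, rather than $\sigma^{d_r}$, of $\operatorname{Gal}(\mathbb F/K_r)$, and we must identify the fixed field of $c\mapsto c^{2^r}$ as exactly $K_r$. Both facts follow from the fact that $\sigma^r$ has order $n/d_r$ and fixed field $\mathbb F_{2^{\gcd(n,r)}}$. We must also check the equivalence $z^{2^j}=z\iff e_{a,r}\mid j$, which uses $e_{a,r}\mid n$.
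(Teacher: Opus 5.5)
Your proposal is correct and follows essentially the same route as the paper. Both arguments use the linearized polynomial representation to decouple the commutation relation coefficientwise into $c_j^{2^r}=a^{2^j-1}c_j$. Each equation is then solved by Hilbert's Theorem~90, giving a solution space of dimension $0$ or $d_r$, and the $n/e_{a,r}$ indices with $e_{a,r}\mid j$ are counted. Your explicit checks that the fixed field of $\sigma^r$ is exactly $K_r$ and that $z_{a,r}^{2^j}=z_{a,r}\iff e_{a,r}\mid j$ supply details the paper states more briefly.
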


\begin{proof}
Every $\mathbb F_2$-linear map $L:\mathbb F\to\mathbb F$ has a unique
linearized polynomial representation
\[
L(x)
=
\sum_{i=0}^{n-1}\ell_i x^{2^i},
\qquad
\ell_i\in\mathbb F.
\]
Then
\[
L T_{a,r}(x)
=
\sum_{i=0}^{n-1}
\ell_i a^{2^i}x^{2^{i+r}},
\]
whereas
\[
T_{a,r}L(x)
=
\sum_{i=0}^{n-1}
a\ell_i^{2^r}x^{2^{i+r}}.
\]
Since the Frobenius exponents are taken modulo $n$, addition by $r$
permutes the indices. Uniqueness of the linearized polynomial
representation therefore gives
\begin{equation}
\label{equ5}
L T_{a,r}=T_{a,r}L
\iff
\ell_i^{2^r}
=
a^{2^i-1}\ell_i
\qquad
(0\le i<n).
\end{equation}

Consider more generally
\[
u^{2^r}=\eta u,
\qquad
\eta\in\mathbb F^\times.
\]
If a nonzero solution $u_0$ exists, then every nonzero solution has the
form
\[
u_0v,
\qquad
v\in\mathbb F_{2^{d_r}}^\times,
\]
because
\[
\left(\frac{u}{u_0}\right)^{2^r}
=
\frac{u}{u_0}.
\]
Hence the solution space has dimension $d_r$ over $\mathbb F_2$.

By the finite field form of Hilbert's Theorem~90, a nonzero solution
exists if and only if
\[
N_{\mathbb F/\mathbb F_{2^{d_r}}}(\eta)=1.
\]
Applying this to the $i$th equation in~\eqref{equ5}, with
\[
\eta=a^{2^i-1},
\]
gives
\[
N_{\mathbb F/\mathbb F_{2^{d_r}}}
\left(a^{2^i-1}\right)
=
z_{a,r}^{\,2^i-1}
=
1.
\]
Equivalently,
\[
z_{a,r}^{2^i}=z_{a,r}.
\]

Since
\[
e_{a,r}
=
[\mathbb F_2(z_{a,r}):\mathbb F_2],
\]
this holds exactly when
\[
e_{a,r}\mid i.
\]
Moreover,
\[
z_{a,r}\in\mathbb F_{2^{d_r}},
\]
so
\[
e_{a,r}\mid d_r\mid n.
\]
Therefore exactly
\[
\frac{n}{e_{a,r}}
\]
indices in $\{0,\ldots,n-1\}$ satisfy this condition.

Each admissible coefficient $\ell_i$ contributes a
$d_r$-dimensional solution space, and the coefficient equations are
independent. Hence
\[
\dim_{\mathbb F_2} Z(T_{a,r})
=
\frac{n}{e_{a,r}}\,d_r
=
\frac{n d_r}{e_{a,r}}.
\]
\end{proof}

For $r=0$, we use the convention
\[
d_0=n,
\qquad
z_{a,0}=a,
\qquad
e_{a,0}=[\mathbb F_2(a):\mathbb F_2].
\]
Lemmas~\ref{lem:multiplication-centralizer} and
\ref{lem:twisted-centralizer} therefore give the unified formula
\[
\label{eq6}
\dim_{\mathbb F_2}Z(T_{a,r})
=
\frac{n d_r}{e_{a,r}}
\]

for every $a\in\mathbb F^\times$ and $0\le r<n$.

\subsection{Probability of Nontrivial Stabilizers}
\label{subsec:generic-rigidity}

We now combine the conjugacy criterion of subsection~\ref{subsec:semilinear-conjugacy} with the
centralizer formulas of subsection~\ref{centralizers_semilinear_maps}. Let $A$ be chosen uniformly from
\[
G=\mathrm{GL}_n(\mathbb F_2),
\]
and define
\[
p_n
=
\Pr_A\!\left(
\operatorname{Stab}(A\circ J_n)\neq\{I\}
\right).
\]

Recall from subsection~\ref{stabilizer_as_conjugacy},
\[
p_n
\le
\frac{1}{|G|}
\sum_{\substack{a\in\mathbb F^\times,\ 0\le r<n\\
(a,r)\neq(1,0)}}
\mathbf 1_{\{T_{a,r}\sim U_{a,r}\}}
\,|C_G(T_{a,r})|.
\]
The conjugacy criterion of subsection~\ref{subsec:semilinear-conjugacy} gives
\[
T_{a,r}\sim U_{a,r}
\iff
z_{a,r}^{2^k+1}=1
\quad\text{for some }0\le k<d_r,
\]
while subsection~\ref{centralizers_semilinear_maps} gives
\[
|C_G(T_{a,r})|
\le
2^{n d_r/e_{a,r}}.
\]
Hence
\[
p_n
\le
\frac{1}{|G|}
\sum_{\substack{a\in\mathbb F^\times,\ 0\le r<n\\
(a,r)\neq(1,0)}}
\mathbf 1_{\{
z_{a,r}^{2^k+1}=1
\text{ for some }0\le k<d_r
\}}
\,2^{n d_r/e_{a,r}}.
\]

The preceding structural results now yield the main probabilistic
rigidity theorem of this section. 

\begin{theorem}
\label{thm:generic-rigidity}
For $n\ge4$,
\[
p_n
\le
\frac{n}{c_0}\,2^{-n^2/2+n},
\]
where
\[
c_0
=
\prod_{j=1}^{\infty}(1-2^{-j})>0.
\]
In particular,
\[
p_n\le 2^{-\Omega(n^2)}.
\]
\end{theorem}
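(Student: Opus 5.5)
The plan is to bound every term of the union-bound sum uniformly, using only the centralizer dimension formula of subsection~\ref{centralizers_semilinear_maps}, and then to divide by a lower bound for $|G|$. Recall the inequality
\[
p_n\le\frac{1}{|G|}\sum_{(a,r)\neq(1,0)}\mathbf 1_{\{T_{a,r}\sim U_{a,r}\}}\,2^{n d_r/e_{a,r}}.
\]
I will drop the conjugacy indicator, bounding it by $1$. The goal is then to show that every nonidentity candidate satisfies $n d_r/e_{a,r}\le n^2/2$.

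\textbf{Uniform bound on the centralizer dimension.} I split into two cases according to $r$.
\begin{itemize}
\item If $1\le r<n$, then $d_r=\gcd(n,r)$ is a proper divisor of $n$, so $d_r\le n/2$. Since $e_{a,r}\ge1$, this gives $n d_r/e_{a,r}\le n^2/2$.
\item If $r=0$, then $a\neq1$, and the unified formula gives dimension $n^2/e_a$. The only element of $\mathbb F^\times$ lying in $\mathbb F_2$ is $1$. Hence $a\notin\mathbb F_2$, so $e_a=[\mathbb F_2(a):\mathbb F_2]\ge2$, and the dimension is at most $n^2/2$.
\end{itemize}
In both cases $|C_G(T_{a,r})|\le|Z(T_{a,r})|\le 2^{n^2/2}$.

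\textbf{Counting and the size of $G$.} The number of nonidentity candidates $(a,r)$ is $n(2^n-1)-1<n\,2^n$. For the group order, I use the standard formula
\[
|G|=\prod_{i=0}^{n-1}(2^n-2^i)=2^{n^2}\prod_{j=1}^{n}(1-2^{-j}).
\]
Each factor $1-2^{-j}$ lies in $(0,1)$, so the finite product is at least the infinite product $c_0$. Thus $|G|\ge c_0\,2^{n^2}$. The constant $c_0$ is positive because $\sum_j 2^{-j}<\infty$; the standard estimate $\log(1-x)\ge -2x$ for $x\le1/2$ gives $c_0\ge e^{-2}$.

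\textbf{Combining.} Putting the pieces together gives
\[
p_n\le\frac{n\,2^n\,2^{n^2/2}}{c_0\,2^{n^2}}=\frac{n}{c_0}\,2^{-n^2/2+n},
\]
which is the stated bound. The asymptotic form follows because $n\,2^{n}$ is $2^{o(n^2)}$, so $p_n\le 2^{-n^2/2+O(n)}=2^{-\Omega(n^2)}$. The hypothesis $n\ge4$ enters only through Theorem~\ref{thm:stabilizer-reduction}, which underlies the union bound.

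\textbf{Main obstacle.} The only delicate point is the $r=0$ case. There the centralizer can be as large as $2^{n^2}$, namely when $a\in\mathbb F_2$, and the argument must rule this out. It is ruled out exactly by excluding the identity candidate $(a,r)=(1,0)$, which forces $e_a\ge2$. Everything else is routine bookkeeping. The conjugacy criterion of Theorem~\ref{thm:semilinear-conjugacy} is not needed for this crude bound; it matters only for the sharper finite-dimensional refinements.
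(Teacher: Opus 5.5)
Your proposal is correct and matches the paper's proof step for step. You drop the conjugacy indicator, bound every nonidentity centralizer dimension by $n^2/2$ via the cases $r=0$ (where $a\neq1$ forces $e_a\ge2$) and $1\le r<n$ (where $d_r\le n/2$), count fewer than $n2^n$ candidates, and use $|G|\ge c_0\,2^{n^2}$; your extra estimate $c_0\ge e^{-2}$ just justifies the positivity of $c_0$, which the paper asserts without proof.
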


\begin{proof}
We first obtain a uniform bound on the centralizers of all nonidentity
semilinear candidates. Recall from subsection~\ref{centralizers_semilinear_maps} that
\[
\dim_{\mathbb F_2} Z(T_{a,r})
=
\frac{n d_r}{e_{a,r}}.
\]

Suppose first that $r=0$. Then $d_0=n$ and
\[
e_{a,0}=[\mathbb F_2(a):\mathbb F_2].
\]
Since $(a,0)\neq(1,0)$, we have $a\neq1$. As
$\mathbb F_2^\times=\{1\}$, this implies $e_{a,0}\ge2$. Hence
\[
\dim_{\mathbb F_2} Z(T_{a,0})
=
\frac{n^2}{e_{a,0}}
\le
\frac{n^2}{2}.
\]

Now suppose that $1\le r<n$. Then $d_r=\gcd(n,r)$ is a proper divisor
of $n$, and therefore
\[
d_r\le\frac n2.
\]
Since $e_{a,r}\ge1$, we obtain
\[
\dim_{\mathbb F_2} Z(T_{a,r})
=
\frac{n d_r}{e_{a,r}}
\le
n d_r
\le
\frac{n^2}{2}.
\]
Thus every nonidentity candidate satisfies
\begin{equation}
\label{eq8}
|C_G(T_{a,r})|
\le
|Z(T_{a,r})|
\le
2^{n^2/2}.
\end{equation}

There are
\[
n(2^n-1)-1<n2^n
\]
nonidentity pairs $(a,r)$. Dropping the conjugacy indicator from the
union bound in subsection~\ref{stabilizer_as_conjugacy} and applying~\eqref{eq8} therefore gives
\[
p_n
\le
\frac{n(2^n-1)-1}{|G|}\,2^{n^2/2}
<
\frac{n2^n}{|G|}\,2^{n^2/2}.
\]

Finally,
\[
|G|
=
|\mathrm{GL}_n(\mathbb F_2)|
=
2^{n^2}\prod_{j=1}^n(1-2^{-j}).
\]
Since
\[
\prod_{j=1}^n(1-2^{-j})
\ge
c_0,
\qquad
c_0=\prod_{j=1}^{\infty}(1-2^{-j})>0,
\]
we conclude that
\[
p_n
<
\frac{n2^n\,2^{n^2/2}}
{c_0\,2^{n^2}}
=
\frac{n}{c_0}\,2^{-n^2/2+n}.
\]
In particular, $p_n$ is bounded by $2^{-\Omega(n^2)}$.
\end{proof}

\begin{remark}
For a fixed $c\in\mathbb F_{2^n}$, Theorem~\ref{thm:stabilizer-reduction}
gives
\[
\operatorname{Stab}(F_{A,c})
=
\{T\in\operatorname{Stab}(A\circ J_n):T(c)=c\}
\subseteq
\operatorname{Stab}(A\circ J_n).
\]
Consequently, if $A$ is chosen uniformly from
$\mathrm{GL}_n(\mathbb F_2)$, then
\[
\Pr_A\!\left(
\operatorname{Stab}(F_{A,c})\ne\{I\}
\right)
\le p_n
\le
\frac{n}{c_0}2^{-n^2/2+n},
\qquad n\ge4.
\]
The refined upper bounds for $p_n$ derived below therefore also apply
to $F_{A,c}$.
\end{remark}

\subsection{Refined Finite Dimensional Bounds}
\label{subsec:refined-bounds}

The uniform estimate in the preceding subsection discards the exact
conjugacy condition. Retaining this condition together with the exact
centralizer dimensions gives the sharper bound
\[
B_n^{\mathrm{conj}}
=
\frac{1}{|G|}
\sum_{\substack{a\in\mathbb F^\times,\ 0\le r<n\\
(a,r)\neq(1,0)}}
\mathbf 1_{\{
z_{a,r}^{2^k+1}=1
\text{ for some }0\le k<d_r
\}}
\,2^{n d_r/e_{a,r}}.
\]
By subsection~\ref{subsec:generic-rigidity},
\[
p_n\le B_n^{\mathrm{conj}}.
\]
For comparison, omitting the conjugacy condition gives
\[
B_n
=
\frac{1}{|G|}
\sum_{\substack{a\in\mathbb F^\times,\ 0\le r<n\\
(a,r)\neq(1,0)}}
2^{n d_r/e_{a,r}},
\]
and therefore
\[
p_n
\le
B_n^{\mathrm{conj}}
\le
B_n.
\]

To evaluate the refined bound more explicitly, we first characterize
which norm values can satisfy the conjugacy condition.

\begin{lemma}
\label{lem:inverse-frobenius-degree}
Let $z\in\mathbb F_{2^d}^\times$, and let
\[
e_z=[\mathbb F_2(z):\mathbb F_2].
\]
If $e_z>1$, then
\[
z^{-1}=z^{2^k}
\]
for some integer \(0\le k<e_z\) if and only if $e_z$ is even and
\[
z^{-1}=z^{2^{e_z/2}}.
\]
Equivalently,
\[
z^{2^{e_z/2}+1}=1.
\]
\end{lemma}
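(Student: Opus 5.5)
The plan is to work inside the cyclic Galois group of $\mathbb F_2(z)=\mathbb F_{2^{e_z}}$ over $\mathbb F_2$, which is generated by Frobenius and has order $e_z$. Put $e=e_z$. Then $z^{2^j}=z$ holds exactly when $e\mid j$, and the Frobenius orbit of $z$ consists of the $e$ distinct elements $z,z^2,\ldots,z^{2^{e-1}}$. The reverse implication is immediate: if $e$ is even and $z^{-1}=z^{2^{e/2}}$, then $k=e/2$ satisfies $0\le k<e$. Multiplying by $z$ shows that $z^{-1}=z^{2^{e/2}}$ is equivalent to $z^{2^{e/2}+1}=1$, which gives the last reformulation.

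For the forward implication, suppose $z^{-1}=z^{2^k}$ with $0\le k<e$. Apply $\sigma^k$ to both sides. Frobenius commutes with inversion, so
\[
z^{2^{2k}}=\bigl(z^{2^k}\bigr)^{2^k}=\bigl(z^{-1}\bigr)^{2^k}=\bigl(z^{2^k}\bigr)^{-1}=z.
\]
Therefore $e\mid 2k$. Next we exclude $k=0$. If $k=0$, then $z^{-1}=z$, so $z^2=1$. Since the multiplicative group has odd order, this forces $z=1$, and then $e=1$, which contradicts $e>1$. Hence $0<k<e$, so $0<2k<2e$. Combined with $e\mid 2k$, this gives $2k=e$. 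So $e$ is even and $k=e/2$, as claimed.

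The only delicate points are ruling out $k=0$ and showing that the range $0\le k<e$ pins $k$ down uniquely. Both are handled above, so I do not expect a real obstacle.
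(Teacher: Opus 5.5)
Your proof is correct and follows essentially the same route as the paper. Both arguments apply the Frobenius power again to get $z^{2^{2k}}=z$, hence $e_z\mid 2k$, then rule out $e_z\mid k$ (which in your range means $k=0$) because it would force $z=z^{-1}$ and so $z=1$; your use of $0<2k<2e_z$ to conclude $2k=e_z$ is just a more concrete form of the paper's congruence $k\equiv e_z/2 \pmod{e_z}$.
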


\begin{proof}
Suppose that
\[
z^{-1}=z^{2^k}.
\]
Applying the same Frobenius power again gives
\[
z^{2^{2k}}=z.
\]
Since the Frobenius orbit of $z$ has length $e_z$, it follows that
\[
e_z\mid 2k.
\]
On the other hand, $e_z\nmid k$, since otherwise
$z^{2^k}=z$ and hence $z=z^{-1}$, which forces $z=1$ in
characteristic two, contradicting $e_z>1$.

Thus $e_z$ is even and
\[
k\equiv \frac{e_z}{2}\pmod{e_z}.
\]
Hence
\[
z^{-1}=z^{2^{e_z/2}}.
\]
The converse is immediate, and the last equality is equivalent to
\[
z^{2^{e_z/2}+1}=1.
\]
\end{proof}

For \(e\ge1\), let \(Q_e\) denote the number of nonzero elements \(z\in\mathbb F_{2^e}\) of exact degree \(e\) over \(\mathbb F_2\) whose multiplicative inverse belongs to the same Frobenius orbit as \(z\).
Then
\[
Q_1=1,
\]
and Lemma~\ref{lem:inverse-frobenius-degree} gives
\[
Q_e=0
\qquad\text{for odd }e>1.
\]
For even $e$,
\[
Q_e
=
\#\left\{
z\in\mathbb F_{2^e}^\times:
[\mathbb F_2(z):\mathbb F_2]=e,\;
z^{2^{e/2}+1}=1
\right\}.
\]

We can now group the refined bound according to the degree of the norm
value.

\begin{proposition}
\label{prop:grouped-refined-bound}
For every $n$,
\[
B_n^{\mathrm{conj}}
=
\frac{1}{|G|}
\left[
\sum_{r=0}^{n-1}
\frac{2^n-1}{2^{d_r}-1}
\sum_{e\mid d_r}
Q_e\,2^{n d_r/e}
-
2^{n^2}
\right].
\]
\end{proposition}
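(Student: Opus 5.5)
The plan is to evaluate the sum defining $B_n^{\mathrm{conj}}$ by fixing $r$ and grouping the elements $a$ according to their norm value $z=z_{a,r}$. First I add back the excluded pair $(a,r)=(1,0)$, which accounts for the subtracted term $2^{n^2}$. For $a=1$, $r=0$ we have $z=1$, the conjugacy indicator equals $1$ (take $k=0$: $1^{2}=1$), $e_{1,0}=1$ and $d_0=n$. The term is therefore $2^{n\cdot n/1}=2^{n^2}$. So it suffices to show that, for each fixed $r$,
\[
\sum_{a\in\mathbb F^\times}\mathbf 1_{\{z_{a,r}^{2^k+1}=1\text{ for some }0\le k<d_r\}}\,2^{n d_r/e_{a,r}}
=\frac{2^n-1}{2^{d_r}-1}\sum_{e\mid d_r}Q_e\,2^{n d_r/e}.
\]

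For fixed $r$, the summand depends on $a$ only through $z=N_{\mathbb F/K_r}(a)\in K_r^\times$. Both the indicator and $e_{a,r}=[\mathbb F_2(z):\mathbb F_2]$ are functions of $z$. The norm map $N_{\mathbb F/K_r}:\mathbb F^\times\to K_r^\times$ is a surjective homomorphism of finite cyclic groups, a standard fact for finite fields. Hence every $z\in K_r^\times$ has exactly $(2^n-1)/(2^{d_r}-1)$ preimages. For $r=0$ the norm is the identity and $d_0=n$, so the fiber count is $1$, consistent with the formula. This produces the factor $\frac{2^n-1}{2^{d_r}-1}$ and reduces the inner sum to $\sum_{z\in K_r^\times}\mathbf 1(z)\,2^{nd_r/e_z}$.

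Next I partition $K_r^\times$ by the degree $e=e_z$, which ranges over the divisors of $d_r$. It remains to check that the number of $z\in K_r^\times$ of exact degree $e$ satisfying the conjugacy condition equals $Q_e$.

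The condition in Theorem~\ref{thm:semilinear-conjugacy} asks for $z^{-1}=z^{2^k}$ with $0\le k<d_r$. Since the Frobenius orbit of $z$ has length $e\mid d_r$, the values $z^{2^k}$ for $0\le k<d_r$ are exactly the orbit of $z$. The condition is thus precisely that $z^{-1}$ lies in the Frobenius orbit of $z$. Elements of exact degree $e$ all lie in $\mathbb F_{2^e}\subseteq K_r$, so the count is intrinsic to $e$ and equals $Q_e$ by definition. Lemma~\ref{lem:inverse-frobenius-degree} confirms consistency with the explicit description of $Q_e$. The case $e=1$, $z=1$, gives $Q_1=1$.

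The main point needing care is this translation of the range $0\le k<d_r$ into orbit membership, together with the surjectivity and uniform fiber size of the norm, including the degenerate case $r=0$. The remaining steps are bookkeeping.
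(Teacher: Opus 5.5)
Your proposal is correct and follows the same route as the paper. You fix $r$ and group the elements $a$ by their norm value, using surjectivity and the uniform fiber size $(2^n-1)/(2^{d_r}-1)$, then sort the norm values by exact degree $e\mid d_r$, and account for the identity pair $(1,0)$ through the subtracted term $2^{n^2}$. Your added justification, that the range $0\le k<d_r$ sweeps out exactly the Frobenius orbit of $z$ so the per-degree count is the intrinsic quantity $Q_e$, makes explicit a step the paper only asserts.
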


\begin{proof}
Fix $r$. The norm map
\[
N_{\mathbb F/\mathbb F_{2^{d_r}}}:
\mathbb F^\times
\longrightarrow
\mathbb F_{2^{d_r}}^\times
\]
is surjective, and each norm value has
\[
\frac{2^n-1}{2^{d_r}-1}
\]
preimages.

If a norm value $z$ has exact degree $e$ over $\mathbb F_2$, then
$e\mid d_r$ and the corresponding centralizer contribution is
\[
2^{n d_r/e}.
\]
Among the elements of exact degree $e$, precisely $Q_e$ satisfy the
conjugacy condition. Hence the total contribution for fixed $r$ is
\[
\frac{2^n-1}{2^{d_r}-1}
\sum_{e\mid d_r}
Q_e\,2^{n d_r/e}.
\]
Summing over $r$ also includes the identity candidate $(a,r)=(1,0)$,
whose contribution is $2^{n^2}$. Removing this term gives the stated
formula.
\end{proof}

This grouped expression gives a computable finite dimensional refinement
of the generic rigidity bound.

\section{Computational Verification}
\label{sec:computational-verification}

We complement the theoretical results with three computations.
First, we independently determine the linear stabilizers of
the AES S-box and two related maps from their complete truth
tables. Second, we enumerate all invertible linear maps in small dimensions to obtain exact probabilities of
nontrivial stabilizers. Third, we sample uniformly random
invertible linear maps in larger dimensions and compare the
observed frequencies with the theoretical bounds.

The computations were performed in Ubuntu 22.04 using
Python 3.10.12. Binary vectors are represented by integers,
with vector addition implemented by bitwise XOR. Linear maps
are represented by their images on the standard basis.
The computational routines use only the Python standard
library, plotting uses Matplotlib.

For finite field arithmetic, we use
\[
\mathbb F
=
\mathbb F_2[x]/(m_n(x)),
\]
with the following irreducible polynomials:
\[
\begin{array}{c|l}
n & m_n(x)\\
\hline
3 & x^3+x+1\\
4 & x^4+x+1\\
5 & x^5+x^2+1\\
6 & x^6+x+1\\
7 & x^7+x+1\\
8 & x^8+x^4+x^3+x+1.
\end{array}
\]
For $n=8$, this is the standard AES field representation.
All nonzero inverses are checked by multiplication. 

All computations were performed on a single CPU thread on an HP desktop
computer with an Intel Core i7-13700F processor and 32~GB of RAM.

\subsection{Independent Verification of AES Rigidity}

We search for all invertible $\mathbb F_2$-linear maps $T$
satisfying
\[
T\circ F=F\circ T
\]
directly from the complete truth table of $F$.
This search does not use the semilinear characterization
of stabilizers or restrict the candidates to maps
of the form $M_a\sigma^r$.

The algorithm maintains an injective linear map on a
subspace of $\mathbb F_2^n$. Assigning $T(x)=y$ for a
vector outside the current domain extends the map to
the enlarged span by linearity. Every assigned pair
$(x,y)$ also forces
\[
T(F(x))=F(y).
\]
These consequences are propagated until no further
assignments are forced or a conflict occurs.
Branches with inconsistent assignments or a failure
of injectivity are discarded.

If the domain remains a proper subspace, the algorithm
chooses the first unassigned standard basis vector
and considers every image outside the current image
subspace. Consequently, every invertible linear map
satisfying the commuting relation occurs in the search.
Each completed map is additionally checked on the
entire truth table.

The initial assignment $T(0)=0$ forces
$T(F(0))=F(0)$. For the full AES S-box, closure from
this constraint already determines the identity map.
For the other two maps, branching is required.

We apply this procedure to the literal 256-entry AES
S-box table, the pure inversion map $J_8$, and
$A_{\mathrm{AES}}\circ J_8$. The AES table is also
checked against its field-inversion and affine
representation. The results are shown in
Table~\ref{tab:aes-exact-verification}.

\begin{table}[t]
\centering
\caption{Independent exhaustive closure search.
Times measure the stabilizer search for each map.}
\label{tab:aes-exact-verification}
\begin{tabular}{lrr}
\hline
Map & Stabilizer order & Time (s)\\
\hline
$S_{\mathrm{AES}}$              & 1 & 0.00031\\
$J_8$                           & 8 & 0.00402\\
$A_{\mathrm{AES}}\circ J_8$      & 1 & 0.00250\\
\hline
\end{tabular}
\end{table}

For both AES maps, the unique solution is the identity.
For $J_8$, the eight recovered matrices are compared
individually with
\[
I,\sigma,\ldots,\sigma^7,
\]
and the two sets agree exactly. Thus the pure inversion case provides a nontrivial control for the
search, while the AES computations independently confirm the rigidity
results of subsection~\ref{subsec:aes-basis-rigidity}.

\subsection{Exact Probabilities and Refined Bounds}
\label{subsec:exact-probabilities}

We compute the exact exceptional probabilities in small dimensions
and evaluate the bounds obtained in Section~3. We also check the
conjugacy criterion and the centralizer dimension formula directly
from the binary matrices of the semilinear candidates.

For each $n\in\{3,4,5\}$, we enumerate all
$A\in\mathrm{GL}_n(\mathbb F_2)$ by constructing ordered linearly
independent column tuples. Each invertible matrix appears exactly
once. We count
\[
N_n=\#\left\{
A\in\mathrm{GL}_n(\mathbb F_2):
\operatorname{Stab}(A\circ J_n)\ne\{I\}
\right\},
\qquad
p_n=\frac{N_n}{|\mathrm{GL}_n(\mathbb F_2)|}.
\]
For $n=3,4$, every map $A\circ J_n$ is tested by the independent
closure search of the preceding subsection. The resulting decision
is also compared with the semilinear candidate test for every $A$,
with no disagreements. In particular, the exact computation for
$n=3$ does not require extending the inversion-equivalence theorem
beyond its stated range.

For $n=5$, we use the stabilizer characterization and test
\[
T_{a,r}A=AU_{a,r}
\]
for nonidentity pairs $(a,r)$. Since both sides are linear, equality
is checked on the standard basis. The enumeration covers all
$9{,}999{,}360$ invertible matrices and takes approximately
$96$ seconds. The exact results are given in
Table~\ref{tab:exact-probabilities}.

\begin{table}[t]
\centering
\caption{Exact probabilities of nontrivial stabilizers.
Times for $n=3,4$ include both closure search and the semilinear test.}
\label{tab:exact-probabilities}
\begin{tabular}{rrrrr}
\hline
$n$ & $|\mathrm{GL}_n(\mathbb F_2)|$ & $N_n$ & $p_n$ & Time (s)\\
\hline
3 & 168 & 21 & $1/8$ & 0.005\\
4 & 20,160 & 600 & $5/168$ & 1.35\\
5 & 9,999,360 & 465 & $1/21504$ & 96.38\\
\hline
\end{tabular}
\end{table}

We next verify the structural formulas for every nonidentity
semilinear candidate in dimensions $n=3,\ldots,8$.
For each pair $(a,r)$, we construct the binary matrices of
$T_{a,r}$ and $U_{a,r}$ and determine whether they are conjugate
by comparing their elementary divisors. These are recovered from
the characteristic polynomials and the nullities of powers of
their irreducible factors evaluated at the matrices.
This direct test does not use the norm criterion.

For all $3{,}537$ candidates, the direct conjugacy test agrees with
\[
z_{a,r}^{2^k+1}=1
\qquad\text{for some }0\le k<d_r.
\]
We compute the nullity of the binary linear system independently
\[
LT_{a,r}-T_{a,r}L=0.
\]
Every computed dimension agrees with $nd_r/e_{a,r}$.
Finally, evaluating $B_n^{\mathrm{conj}}$ by its candidate sum
and by the grouped expression in
Proposition~\ref{prop:grouped-refined-bound}
gives exactly the same rational value in each dimension.

For comparison, write
\[
b_n=\frac{n}{c_0}2^{-n^2/2+n}
\]
for the uniform bound, and retain the notation
$B_n$ and $B_n^{\mathrm{conj}}$ from
subsection~\ref{subsec:refined-bounds}.
Table~\ref{tab:bound-comparison} reports these bounds for
$n=4,\ldots,8$, along with the available exact probabilities.
The exact $n=3$ result is reported separately because the
probability theorem is stated for $n\ge4$.

\begin{table}[t]
\centering
\caption{Theoretical upper bounds and exact probabilities.
A dash indicates that the exact probability was not computed.}
\label{tab:bound-comparison}
\begin{tabular}{rrrrr}
\hline
$n$ & $b_n$ & $B_n$ & $B_n^{\mathrm{conj}}$ & Exact $p_n$\\
\hline
4 & $8.65687\times10^{-1}$ & $1.30159\times10^{-1}$
  & $1.23810\times10^{-1}$ & $2.97619\times10^{-2}$\\
5 & $9.56458\times10^{-2}$ & $4.92832\times10^{-4}$
  & $3.96825\times10^{-4}$ & $4.65030\times10^{-5}$\\
6 & $5.07238\times10^{-3}$ & $1.53807\times10^{-4}$
  & $1.52264\times10^{-4}$ & ---\\
7 & $1.30766\times10^{-4}$ & $6.93708\times10^{-10}$
  & $5.95276\times10^{-10}$ & ---\\
8 & $1.65117\times10^{-6}$ & $1.52614\times10^{-8}$
  & $1.52613\times10^{-8}$ & ---\\
\hline
\end{tabular}
\end{table}

The conjugacy condition improves the bound in every displayed
dimension, but the improvement varies with $n$. For example, relative to $B_n$, the reduction is approximately $19.48\%$
for $n=5$ and $14.19\%$ for $n=7$, whereas it is only
$0.000755\%$ for $n=8$.

For $n=4,5$, the exact probabilities lie strictly below
$B_n^{\mathrm{conj}}$. The remaining gap reflects two sources
of overcounting: the replacement of group centralizers by their
full centralizer algebras, and the overlap among the stabilizer
events in the union bound.
The finite dimensional bounds $B_n$ and $B_n^{\mathrm{conj}}$
depend on the divisor structure of $n$ and on the degrees of the
norm values, so they need not decrease monotonically with $n$.
Figure~\ref{fig:bound-comparison} illustrates this behavior on a
logarithmic scale.

\begin{figure}[t]
\centering
\includegraphics[width=0.6\textwidth]{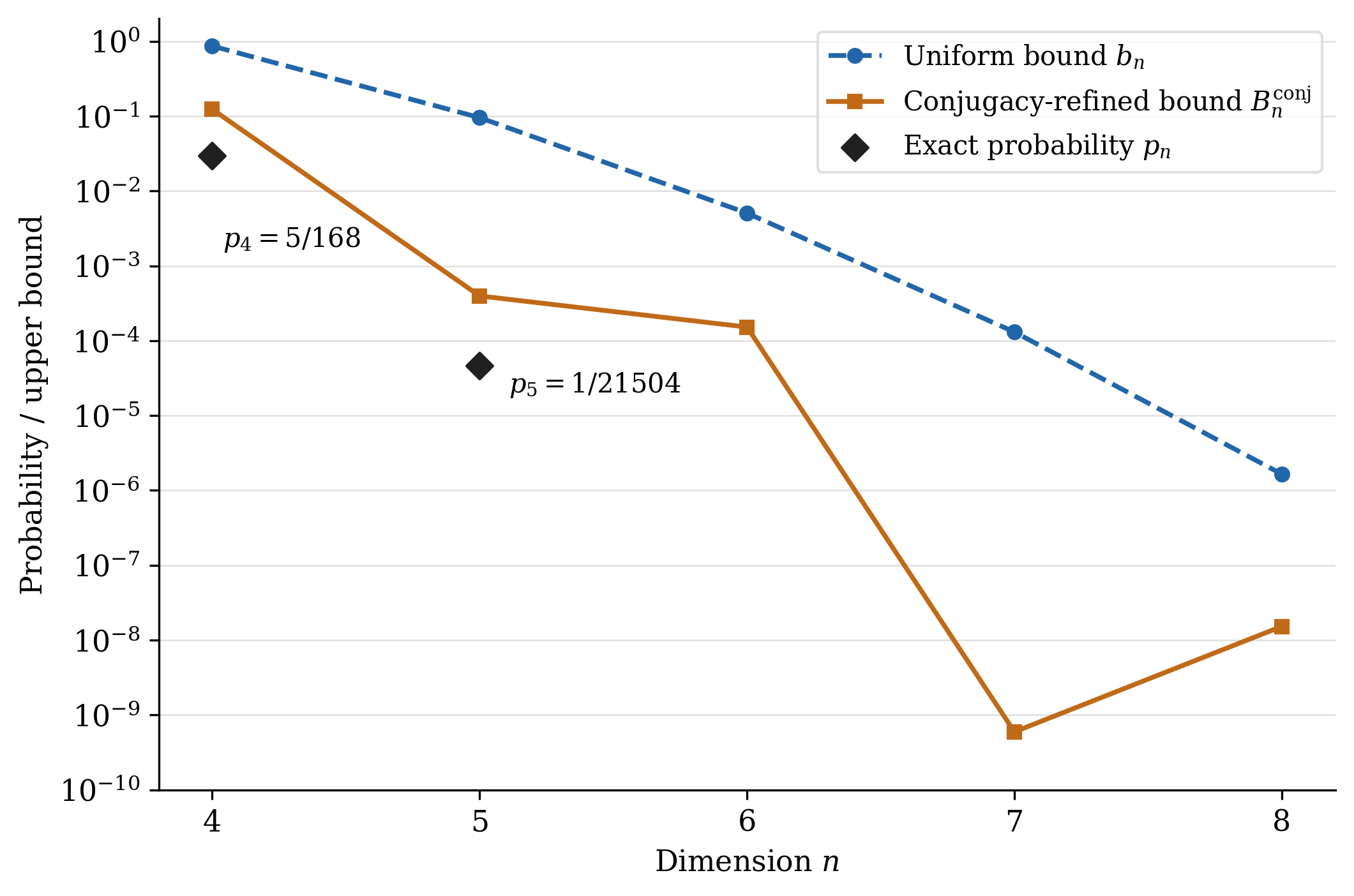}
\caption{Uniform and conjugacy-refined upper bounds for the exceptional
probability with the exact values of $p_n$ for $n=4,5$ on a
logarithmic scale.}
\label{fig:bound-comparison}
\end{figure}

\subsection{Monte Carlo Estimates}
\label{subsec:monte-carlo}

For each $n\in\{5,6,7,8\}$, we independently sample
$100{,}000$ matrices from $\mathrm{GL}_n(\mathbb F_2)$.
Each successive column is chosen uniformly outside the span
of the preceding columns. Thus every ordered basis has probability
\[
\prod_{k=0}^{n-1}(2^n-2^k)^{-1}
=
|\mathrm{GL}_n(\mathbb F_2)|^{-1},
\]
so the resulting matrices are uniformly distributed.
Sampling is with replacement across trials, and the pseudorandom
generator uses seed $20260906+n$ in dimension $n$.

Each sampled matrix $A$ is tested for a nontrivial stabilizer
using the condition
\[
T_{a,r}A=AU_{a,r},
\qquad (a,r)\ne(1,0).
\]
As an additional consistency check, we compare the semilinear
test with independent closure search on random matrices and
prescribed examples with nontrivial stabilizers, with no
disagreements.

Let $m=100{,}000$, let $k_n$ denote the number of samples with
a nontrivial stabilizer, and put
\[
\widehat p_n=\frac{k_n}{m}.
\]
Table~\ref{tab:monte-carlo} reports the observed frequencies
and pointwise $95\%$ Wilson score confidence intervals~\cite{wilson1927probable}.
With $z=1.959964$, the interval endpoints are
\[
\frac{
\widehat p_n+\dfrac{z^2}{2m}
\;\pm\;
z\sqrt{
\dfrac{\widehat p_n(1-\widehat p_n)}{m}
+\dfrac{z^2}{4m^2}
}
}{
1+\dfrac{z^2}{m}
}.
\]
These intervals are computed separately for each dimension
and are not simultaneous confidence bands.

\begin{table}[t]
\centering
\caption{Monte Carlo estimates from $100{,}000$ samples per
dimension, with pointwise $95\%$ Wilson score intervals.}
\label{tab:monte-carlo}
\begin{tabular}{rrrrr}
\hline
$n$ & $k_n$ & $\widehat p_n$ $(\times10^{-5})$
& $95\%$ CI $(\times10^{-5})$ & Time (s)\\
\hline
5 & 3 & 3.0 & $[1.02,8.82]$ & 1.12\\
6 & 6 & 6.0 & $[2.75,13.09]$ & 2.07\\
7 & 0 & 0.0 & $[0,3.84]$ & 4.02\\
8 & 0 & 0.0 & $[0,3.84]$ & 8.27\\
\hline
\end{tabular}
\end{table}

For $n=5$, the exact probability
\[
p_5=\frac{1}{21504}\approx4.65030\times10^{-5}
\]
lies within the reported interval.
For $n=6$, the observed frequency
\[
\widehat p_6=6\times10^{-5}
\]
is below the refined upper bound
\[
B_6^{\mathrm{conj}}
\approx1.52264\times10^{-4}.
\]

For $n=7,8$, no nontrivial stabilizers are observed.
This does not imply that the true probabilities are zero.
In these dimensions, the theoretical bounds are smaller than the upper endpoints of the sampling intervals. Let \(\mathbb E[k_n]\) denote the expected number of samples
with a nontrivial stabilizer.
Indeed,
\[
\mathbb E[k_n]
=
mp_n
\le
mB_n^{\mathrm{conj}},
\]
and therefore
\[
\mathbb E[k_7]\le5.95276\times10^{-5},
\qquad
\mathbb E[k_8]\le1.52613\times10^{-3}.
\]
The absence of observed events is therefore expected at this sample size.



\section{Conclusion}

We studied basis rigidity for finite field inversion under outer affine
transformations, with the AES S-box as the main motivating example. For
AES, we proved that no nontrivial linear change of basis preserves
the coordinate realization. In fact, the AES S-box linear layer already removes the basis redundancy
of pure inversion.

For the general family $A\circ J_n$, we characterized which semilinear
candidates can occur as linear symmetries, determined the
corresponding centralizer dimensions, and derived both
finite dimensional bounds and an asymptotic rigidity result. In particular, for a uniformly random
invertible linear map $A$, the probability that the linear stabilizer is nontrivial is bounded by $2^{-\Omega(n^2)}$.

The computations independently confirm the AES result, the structural
formulas, and the finite dimensional bounds. Overall, the results show
that the basis redundancy of inversion can be completely destroyed by
an outer linear layer and is generically absent in the family considered
here.

\bibliographystyle{plain}
\bibliography{reference}

\end{document}